\newtheorem{proposition}{Proposition}
\newtheorem{definition}{\bf Definition}
\begin{document}

%
\title{\huge Backhaul-Aware Interference Management in the Uplink of Wireless Small Cell Networks
}
\author{
\IEEEauthorblockN{Sumudu Samarakoon\IEEEauthorrefmark{1}, Mehdi Bennis\IEEEauthorrefmark{1}, Walid Saad\IEEEauthorrefmark{2}, and Matti Latva-aho\IEEEauthorrefmark{1} \\}
\IEEEauthorblockA{\small\IEEEauthorrefmark{1}Centre for Wireless Communications, University of Oulu, Finland, \\ email: \url{{sumudu,bennis,matti.latva-aho}@ee.oulu.fi} \\
\IEEEauthorrefmark{2}Electrical and Computer Engineering Department, University of Miami, Coral Gables, FL, USA, email: \url{walid@miami.edu}}
\vspace{-.5cm}
\thanks{The authors would like to thank the Finnish funding agency for technology and innovation, Elektrobit and Siemens Networks for supporting this work. This work has been performed in the framework of the LOCON Project, TEKES, Finland, the SHARING project under the Finland grant 128010, and supported by the U.S. National Science Foundation (NSF) under Grant CNS-1253731.}
}
\maketitle
\nopagebreak[4]

\begin{abstract}
The design of distributed mechanisms for interference management is one of the key challenges in emerging wireless small cell networks whose backhaul is capacity limited and heterogeneous (wired, wireless and a mix thereof).
In this paper, a novel, backhaul-aware approach to interference management in wireless small cell networks is proposed.
The proposed approach enables macrocell user equipments (MUEs) to optimize their uplink performance, by exploiting the presence of neighboring small cell base stations.
The problem is formulated as a noncooperative game among the MUEs that seek to optimize their delay-rate tradeoff, given the conditions of \emph{both} the radio access network and the -- possibly heterogeneous -- backhaul.
To solve this game, a novel, distributed learning algorithm is proposed using which the MUEs autonomously choose their optimal uplink transmission strategies, given a limited amount of available information. 
The convergence of the proposed algorithm is shown and its properties are studied.
Simulation results show that, under various types of backhauls, the proposed approach yields significant performance gains, in terms of both average throughput and delay for the MUEs, when compared to existing benchmark algorithms.
\end{abstract}

\begin{keywords}
	heterogeneous networks;
	capacity-limited backhaul;
	wired and wireless backhaul;
	reinforcement learning;
	game theory.
\end{keywords}

\section{Introduction}\label{sec:intro}

The demand for high-speed wireless access is increasing at an unprecedented pace, requiring  operators to explore new approaches in order to boost the network capacity and improve  the coverage of next-generation wireless cellular networks. 
Recently, deploying low-cost, low-power small cells (such as femtocells and picocells) and relay nodes underlaid over existing macro-cellular networks has emerged as a promising solution to deal with this increasing wireless traffic demands~\cite{pap:vikram08}. 
The introduction of such small cell networks (also known as heterogeneous networks or HetNets) has attracted considerable attention from the research community~\cite{IG00,LP00,DL00,GTF02,pap:amitabha12,pap:osvaldo10,pap:ping11,pap:ritesh10,pap:ivana11}.
Existing works have dealt with various issues related to interference management, coverage improvement, traffic offload, mobility and load balancing, among others \cite{onln:mark12}.

Unlike traditional macro-cells, in which a dedicated, high-capacity backhaul exists, small cells are expected to connect to existing, capacity-limited Internet protocol (IP) backhauls such as digital subscriber line (DSL).
Consequently, when designing resource management solutions for small cell networks (SCNs), one must factor in a variety of constraints pertaining to the backhaul such as its limited capacity or its potential heterogeneity (i.e., the co-existence of a wired and wireless backhaul~\cite{IG00,pap:ritesh10B}).
Remarkably, while a body of literature on small cell networks exists such as \cite{DL00,GTF02,pap:lobinger10,pap:johnp10,pap:guvenc11}, beyond~\cite{pap:osvaldo10,pap:ping11,pap:ritesh10,pap:ivana11} little work seems to have studied how the backhaul constraints impacts the resource management processes at the radio access level.
Indeed, most works assume perfect and reliable backhaul conditions with infinite capacity~\cite{pap:vikram08,pap:ivana11}, \cite{pap:antonio11,onln:cisco10}  or near-perfect backhaul solutions relying on out-of-band and over-the-air (OTA) links \cite{pap:wang10,onln:netkrom12}. 
In \cite{LP00}, the authors provide a coverage and interference analysis while focusing on spectrum allocation and interference mitigation for an orthogonal frequency-division multiple access (OFDMA) macro and femtocell scenario.
The authors in \cite{DL00} propose a novel cooperative scheduling scheme for a two-tier network in order to mitigate both uplink and downlink interference with range expansion.
In \cite{pap:osvaldo10}, the authors propose novel interference management techniques in  a two-tier network where femtocells are connected to the macro base station (MBS) via unreliable access links.
The work in \cite{pap:ping11} studies the requirements of inter-cell interference coordination in order to improve the quality of service of SCNs,  under different backhaul types (wired or wireless).  
A dynamic interference management algorithm is proposed for heterogeneous networks in \cite{pap:ritesh10}, focusing on resource negotiation over third-party backhaul connections and on interference management for quality of service via over-the-air signaling. 
In \cite{pap:ivana11}, a backhaul scheduling scheme based on  picocells' traffic demands with shared wireless backhaul is proposed.

Clearly, deploying novel solutions for the backhaul of SCNs is critical in order to reap their potential capacity improvements.
However, these capacity improvements are essentially limited by the choice of an appropriate backhaul.
In fact, the heterogeneous and unreliable nature of the SCN's backhaul poses a fundamental question: should the small cells use an over-the-air (in-band) backhaul which requires significant spectrum resources but can guarantee reasonable delays, or should they use a third-party wired backhaul which does not require any spectrum resources but could lead to significant traffic delays?
The answer to this fundamental question is particularly important for an optimal small cell deployment that takes into account both access and backhaul links.

In this work, we study the problem of backhaul-aware resource allocation in the uplink of wireless SCNs.
Resource allocation in small cell networks requires a fundamental paradigm shift from traditional, centralized cellular system operation toward distributed, self-organizing solutions, collectively known as self-organizing networks (SONs).
The idea of SON has been recently studied in the context of  heterogeneous radio access networks (RANs), with a focus on self-optimization, self-configuration, and self-healing techniques \cite{book:hamalainen12,pap:viering09,pap:turkka11,pap:walid09}.
In \cite{pap:viering09}, the authors present a mathematical framework focusing on downlink load balancing via SON.
Those load balancing investigations are extended in \cite{pap:turkka11} for a non-regular cell layout in the uplink direction.
The authors in \cite{pap:walid09} propose a distributed formation of the uplink tree structure of relay nodes and BSs using a game theoretic approach. 
In \cite{GTF02}, the authors propose a hierarchical approach for power control in two-tier small cell networks, using multi-leader Stackelberg games.
Thus far, these studies have dealt with notions of self-optimization and self-configuration aiming at optimizing radio access parameters (power levels, frequency allocations, etc).

In order to address the problem of optimizing radio access parameters in a distributed manner, various mathematical tools have been proposed.
In particular, reinforcement learning (RL) has played a central role in designing self-organizing SCNs \cite{pap:carme12,pap:perlaza10,pap:bennis11,pap:bennis12}.
Using RL, wireless nodes are able to self-organize in a decentralized manner, relying only on  local information, with little reliance on centralized information exchange.
In \cite{pap:carme12}, the authors propose best-response dynamics in order to study the throughput and delay performance based on users and BS densities for both uplink and downlink. 
A RL mechanism is proposed in \cite{pap:perlaza10} for joint utility and strategy estimation for cognitive terminals and, extended in \cite{pap:bennis11} to provide an interference mitigation technique for cognitive femtocell networks.
The study conducted in \cite{pap:bennis12} introduces a regret-based learning algorithm for cognitive femtocell networks in order to mitigate cross-tier interference.
Consequently, given this recent interest in SON, in this paper, we aim at developing a novel, self-organizing interference management solution using which the macrocell users leverage existing small cells, while taking into account a  heterogeneous (wired, wireless) backhaul.

The main contribution of this paper is to propose novel self-organizing interference management strategies for wireless SCNs while taking into account  the constraints due to the presence of a heterogeneous backhaul. 
Here, small cells act as ``helping relays" by decoding and forwarding the macrocell user equipments (MUE) uplink traffic to the MBS over  heterogeneous 
backhauls\footnote{
Note that the downlink problem is a challenging problem which deserves its own investigation and thus, it is not discussed in this work.
}.
In the proposed approach, the MUEs judiciously split their uplink traffic into two parts; {\it (i) a coarse message} which can only be decoded at the MBS and {\it (ii) a fine message} which is broadcasted by the MUEs and decoded by neighboring small cell base stations~(SBS), as well as the MBS. 
Here, the MUEs  autonomously select their best helping SBS and optimize their transmission strategy (throughput/delay tradeoff), while at the same time accounting for the underlying backhaul conditions. 
Due to the coupling in MUEs' strategies, the problem is cast as a non-cooperative game with the MUEs as the players.
To find an equilibrium of the formulated game, we propose a solution using novel concepts from reinforcement learning, particularly, using learning with imperfect information.
Using the proposed approach, the MUEs self-organize and implicitly coordinate their transmission strategies in a fully distributed manner while optimizing their utility function which captures the tradeoff between rate and delay.
Simulation results show that using the proposed solution, the MUEs are able to optimize the tradeoff between throughput and delay, while significantly improving the overall network performance, relatively to a number of  benchmark interference management algorithms.

The rest of the paper is organized as follows. The network and backhaul models are introduced in Section~\ref{sec:sys_mod}. Section~\ref{sec:prop_mthd} presents the formulation of the proposed cooperative relaying technique over heterogeneous backhauls. The proposed reinforcement learning technique is discussed in Section~\ref{sec:game}. Simulation results are presented and analyzed in Section~\ref{sec:results}. Finally, conclusions are drawn in Section \ref{sec:conclu}.

\section{System Model} \label{sec:sys_mod}

\subsection{Notation}\label{sec:notation}
The regular symbols represent the scalars while the boldface symbols are used for the vectors.
The sets are denoted by upper case calligraphic symbols.
$|\mathcal{X}|$ and $\Delta(\mathcal{X})$ represent the cardinality and the set of all probability distributions of the finite set $\mathcal{X}$, respectively.
The function $\mathbbm{1}_{\mathcal{X}}(x)$ denotes the indicator function which is defined as,
$$\mathbbm{1}_{\mathcal{X}}(x)=\begin{cases} 1 &\mbox{if } x\in\mathcal{X}, \\ 0 &\mbox{if } x\notin\mathcal{X}.\end{cases}$$
Whenever a common parameter is formulated for different approaches, we use the notation of $[parameter]_{approach}$ in order to differentiate them from one other.
All the symbols which are used in the rest of the paper are summarized in Table \ref{tab:symbols}.

\begin{table}[!t]
\caption{Notation Summary.}
\begin{center}
\begin{tabular}{|p{23 mm} p{57 mm}|}
\hline
\makebox[23 mm][c]{\bf Symbols} & \makebox[57 mm][c]{\bf Description} \\
\hline \hline
$[\cdot]_{\rm CLA},~[\cdot]_{\rm WRD}$ and $[\cdot]_{\rm OTA}$	 & Property defined for classical approach, proposed approach with wired backhaul and over-the-air backhaul \\
$\mathcal{M},~\mathcal{K}$ and $\mathcal{S}$	& The Sets of MUEs, SUEs and SBSs \\
$\mathcal{N},~\mathcal{N}_{m},~\mathcal{N}_{k}$ and $\mathcal{N}_{s}$	& The set of sub-carriers and the sets of sub-carriers allocated for MUE $m$, SUE $k$ and SBS $s$ \\
$|h_{ji}^n|^2$	& Channel gain between transmitter $j$ and receiver $i$ over the sub-carrier $n$ \\
$P_j^n$	& Transmission power of the transmitter $j$ over sub-carrier $n$ \\
$I_k^n$	& The aggregated interference experienced by SUE $k$ \\
$N_0$	& Gaussian thermal noise power \\
$R_j,~D_j$	& Rate and Delay experienced by the UE (or the SBS act as a relay) $j$ \\
$\alpha$	& Sensitivity parameter between rate and delay \\
$C_s,~\overline{C}$	& Capacity share for SBS $s$ and the total capacity of the wired Backhaul \\
$\nu_{m,s}$	& The fraction of capacity allocation for UE $m$ from the backhaul of SBS $s$ \\
$\theta_m$	& The fraction of power allocation for fine message by the MUE $m$ \\
$R_{m,C},~R_{m,F}$	& Rate of the coarse message and the fine message for MUE $m$ \\
$D_{m,C},~D_{m,F}$	& Delay of the coarse message and the fine message for MUE $m$ \\
$\mathcal{A}_m$	& The set of actions of MUE $m$ \\
$a_m,~\boldsymbol{a}_{-m}$	& An action of MUE $m$ and the set of actions played by the remaining players \\
$u_m(a_m,\boldsymbol{a}_{-m})$	& The utility of MUE $m$ \\
$\boldsymbol{\pi}_m$	& The mixed-strategy probability distribution \\
$\boldsymbol{r}_m$	& The regret vector of MUE $m$ \\
$\tilde{u}_m,~\tilde{\boldsymbol{r}}_m$	& The feedback of the utility and the estimated regret vector of MUE $m$ \\
$\boldsymbol{\beta}_m(\cdot)$	& Boltzmann-Gibbs distribution \\
$\kappa_m$	& Temperature parameter which balance the exploration and exploitation in the learning algorithm \\
$\lambda,~\gamma$ and $\mu$	& Learning rates \\
\hline
\end{tabular}
\end{center}
\label{tab:symbols}
\end{table}

\subsection{Network Model}\label{sec:network_model}
Consider the uplink transmission of an OFDMA two-tier wireless small cell network.
The MBS is located at the center of a cell
and serves a set of MUEs denoted by $\mathcal{M} = \{1,\ldots,M\}$.
We let $\mathcal{S} = \{1,\ldots,S\}$ denote the set of SBSs that are underlaid on the macro-cellular network with each small cell $s\in \mathcal{S}$ having a radius of $d_s$.
Let $\mathcal{K}=\{1,\ldots,K\}$ denote the set of small cell users (SUEs) and let $\mathcal{N} = \{1,\ldots,N\}$ denote the set of sub-carriers.
Here, $|h_{ji}^{n}|^2$ represents the channel gain between transmitter $j$ and receiver $i$ on sub-carrier $n$.
We use the index $0$ to refer to the MBS.
The total transmit power of the $j$-th transmitter over $n$-th sub-carrier is $P_{j}^{n}$, and the variance of the complex Gaussian thermal noise at the receiver is denoted by $N_0$.

In the classical uplink transmission scenario, referred to as ``CLA'' hereafter,  no coordination is assumed between the macro- and small cell tiers. Here, the achievable rates of MUE $m\in\mathcal{M}$ and SUE $k\in\mathcal{K}$ serviced by the MBS and SBS $s$ are, respectively, given by:
\begin{align} \label{eqn:noncop_mbsrate}
	&\!\begin{multlined}[t][\displaywidth] 
		[R_{m}]_{\rm CLA} = \sum_{\forall n\in\mathcal{N}_m}\log_2 \biggl( 1 + \frac{}{N_0 + } \\
		\frac{|h_{m0}^{n}|^2P_m^n}{ \displaystyle\sum_{\substack{\forall i\in\mathcal{M}\\i{\neq}m}}\mathbbm{1}_{\mathcal{N}_i}(n)|h_{i0}^{n}|^2P_{i}^n + \displaystyle\sum_{\forall j\in\mathcal{K}}\mathbbm{1}_{\mathcal{N}_j}(n)|h_{j0}^{n}|^2P_{j}^n}  \biggr), 
	\end{multlined} \\
\label{eqn:noncop_SBSrate}
	&[R_{k}]_{\rm CLA} = \min\Biggl\{\sum_{\forall n\in\mathcal{N}_k}\log_2 \biggl( 1 + \frac{|h_{ks}^{n}|^2 P_{k}^n}{N_0 + I_k^n } \biggr), \nu_{s,k}C_s\Biggr\},
\end{align}
where $\mathcal{N}_m$ and $\mathcal{N}_k$ denote the set of sub-carriers assigned to MUE $m$ and SUE $k$, respectively.
The aggregate interference experienced by the $k$-th SUE is $I_k^n = \sum_{\forall i\in\mathcal{M}}\mathbbm{1}_{\mathcal{N}_i}(n)|h_{ms}^{n}|^2P_m^n + \sum_{\substack{\forall j\in\mathcal{K}\\j\neq k\\~}}\mathbbm{1}_{\mathcal{N}_j}(n)|h_{jf}^{n}|^2P_{j}^n$, the backhaul capacity between the $s$-th SBS and the MBS is~$C_s$, and $\nu_{s,k}$ is the fraction of the capacity allocation for SUE $k$. 
Moreover, without loss of generality, an M/D/1 queue with a packet generation rate of $\rho_m$, the transmission delay of the MUE $m$ is given by \cite{book:dim92}:
\begin{equation}\label{eqn:delayClasic}
	[D_m]_{\rm CLA} = \frac{\rho_m}{2[R_m]_{CLA}([R_m]_{CLA}-\rho_m)},
\end{equation}
where $[R_m]_{CLA}$ is given by (\ref{eqn:noncop_mbsrate}).

In this work, we propose a coordination mechanism among MUEs and SBSs in which the SBSs act as relay nodes for the MUEs.
Using this coordination, the users can improve their uplink transmission rates, as given in (\ref{eqn:noncop_mbsrate}).
However, this mandates not only an efficient coordination mechanisms but also an adequate backhaul design.
The main reason is the backhaul capacity is a limiting factor of the achievable throughput.
Although there exists a good over-the-air link between a user and a SBS, without a good backhaul link a user cannot achieve the expected higher rates.
In fact, the reliability of the backhaul connection between SBSs and MBSs is instrumental in the optimal deployment of  small cell networks.
Such networks requires designs that jointly account for access and backhaul links.
In practice, two types of backhauls are considered for wireless small cell networks~\cite{pap:wang10,onln:netkrom12}: wireless and wired, as described next.

\subsubsection{In-band wireless backhaul}
We distinguish two key types of wireless backhauls: {\it in-band backhaul} in which the backhaul network and all the users share the available entire spectrum band and {\it out-of-band backhaul} in which an additional separate spectrum band is allocated for the backhaul.
The availability of an out-of-band spectrum band is more convenient in providing a high capacity  backhaul.
However, it is not always possible for service providers to allocate an out-of-band spectrum due to limitations such as cost and spectrum availability~\cite{pap:carlos07}.
On the contrary, an in-band backhaul is always possible to integrate with the existing resources as long as it guarantees an improved service.
Thus, our focus is on the in-band wireless backhaul.

In this case, the over-the-air backhaul capacity of an arbitrary SBS $s$ -- the rate defined in (\ref{eqn:rateBackhaul}) -- is limited by the interference from other SBSs given by $\sum_{ \forall l\in\mathcal{S},l\neq s}\mathbbm{1}_{\mathcal{N}_l}(n)|h_{l0}^{n}(t)|^2P_{l}^n$.
Moreover, the rate of SBS $s$ over the backhaul link is given by:
\begin{equation}
	\label{eqn:rateBackhaul}
	R_{s0} = \sum_{\forall n\in\mathcal{N}_s} \log_2 \Biggl( 1 + \frac{|h_{s0}^{n}|^2P_{s}^n}{N_0 +  \displaystyle\sum_{ \forall l\in\mathcal{S},l\neq s}\mathbbm{1}_{\mathcal{N}_l}(n)|h_{l0}^{n}|^2P_{l}^n } \Biggr),
\end{equation}
where $\mathcal{N}_s$ is the set of sub-carriers allocated for the backhaul of the wireless small cell $s$.

\subsubsection{Wired backhaul}

Under a wired backhaul, we consider that the packet generation process of small cells follows a Poisson distribution in which we model the entire backhaul of the system as an M/D/$1$ queue~\cite{book:dim92}.
Let $C_s$ be the capacity of the $s$-th SBS-MBS link and, the total wired backhaul capacity $\overline{C}$ is given by:
\begin{equation}\label{eqn:totwiredcapacity}
	\sum_{\forall s\in\mathcal{S}}C_s \leq \overline{C},
\end{equation}
where the total capacity $\overline{C}$ is a fixed quantity and the capacity per SBS depends on the scenario.
It is worthy to mention that although the wired backhaul is interference-free unlike the over-the-air backhaul, it may suffer from congestion due to the fact that multiple small cells share the same backhaul.

\section{Cross-tier Coordination for SCNs}\label{sec:prop_mthd}

To enable an efficient co-existence between the macro and small cell tiers, we propose a cooperative approach using which the small cells assist neighboring MUEs to improve the overall performance, via the concept of rate splitting \cite{pap:randa11,pap:osvaldo10,pap:osvaldo09}.
In this context, each MUE $m \in  \mathcal{M}$ builds a {\it coarse} mesage $X_{m,C}^{n}$ and a {\it fine} message $X_{m,F}^{n}$ (direct signal and relayed signal, respectively) for each of its transmitted signals as illustrated in Fig.~\ref{fig:model}.
The source (MUE) superimposes those two codewords and broadcasts the combined message $X_m^n$.
The MBS is capable of decoding both coarse and fine messages.
Moreover, SBSs can only decode the MUE's fine message and relay it to the MBS.
Thus, the transmission rates associated with both coarse and fine messages are such that the neighboring SBSs can reliably decode the fine message, while the MBS decodes both the coarse and fine messages.
Mathematically, this is expressed as follows:
\begin{equation} \label{eqn:coarseNfine}
	 X_{m}^{n}=X_{m,C}^{n}+X_{m,F}^{n}.
\end{equation}
Moreover, the transmission power allocations of the MUE's coarse signal to the MBS and the fine signal to SBSs are $P_{m,C}^{n}=\big(1-\theta_m\big)P_m^n$ and $P_{m,F}^{n}=\theta_m P_m^n$, respectively, with $\theta_m\in[0,1]$ where $\theta_m$ is the fraction of power allocation for the fine message of MUE $m$.

\begin{figure}[!t]
\centering
\includegraphics[width=0.5\textwidth]{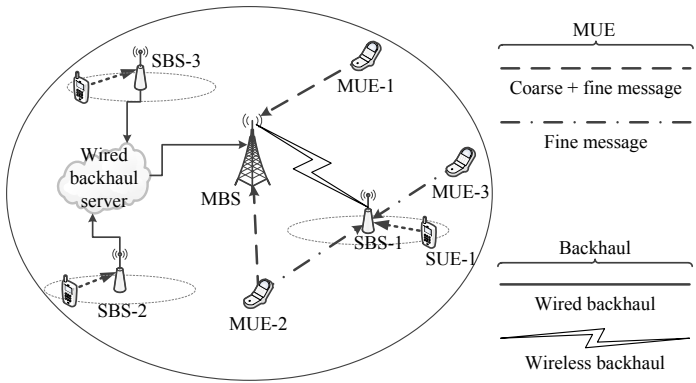}
\caption{Illustration of the proposed relaying approach in which the MUEs use rate splitting techniques in their uplink transmission. For an example MUE-2 uses SBS-1 as a relay while maintaining a direct link with the MBS. }
\label{fig:model}
\end{figure}

In what follows, we consider a capacity-limited backhaul which is shared among all SBSs.
Therefore, the amount of information that can be relayed and the number of MUEs which can be supported by each SBS are limited compared to a system with infinite backhaul capacity.
This is the main motivation behind the rate-splitting approach making use of both coarse and fine messages.

\subsection{Rate formulation with over-the-air (OTA) backhaul}

We assume a half-duplex decode-and-forward (DF) uplink transmission in which both the MUEs and the SUEs transmit during the first time slot and SBSs relay both signals (over the backhaul) during the second time slot. The uplink rate   of MUE $m\in \mathcal{M}$ when transmitting its coarse message to its serving MBS is given by:
\begin{equation}
\label{eqn:rateMacroPriv}
	[R_{m,C}]_{\rm OTA} = \sum_{n\in\mathcal{N}_m}\log_2 \Biggl( 1 + \frac{|h_{m0}^{n}|^2\big(1-\theta_m\big)P_m^n}{N_0 + |h_{m0}^{n}|^2\theta_m P_m^n + J_m^n } \Biggr),
\end{equation}
where $(1-\theta_m)P_m^n$ is the MUE's transmission power allocated for the coarse message and ``OTA" refers to the over-the-air backhaul model.
The interference caused by other users is given by $J_m^n = \sum_{\substack{\forall i\in\mathcal{M}\\i{\neq}m}}\mathbbm{1}_{\mathcal{N}_i}(n)|h_{i0}^{n}|^2P_{i}^n + \sum_{\forall j\in\mathcal{K}^n}|h_{j0}^{n}|^2P_{j}^n$.
The fine message is received by the MBS and relayed via an SBS\footnote{In this work, we assume that an MUE gets support from at most one SBS. The scenario of an MUE receiving support from multiple SBSs is left for future work.}. 
Thus, the rate of the MUE $m$-th fine message overheard by the MBS is calculated as,
\begin{equation}
\label{eqn:rateMacroCom}
	[R_{m0,F}]_{\rm OTA} = \sum_{n\in\mathcal{N}_m}\log_2 \Biggl( 1 + \frac{|h_{m0}^{n}|^2\theta_m P_m^n}{N_0 + J_m^n } \Biggr),
\end{equation}
Note that the sum of (\ref{eqn:rateMacroPriv}) and (\ref{eqn:rateMacroCom}) is equivalent to the MUE rate in the classical two-tier network scenario, as given in (\ref{eqn:noncop_mbsrate}).
Similarly, the rate of MUE $m$ when  transmitting its fine message to  SBS $s \in \mathcal{S}$ is given by:
\begin{equation}\label{eqn:ratesmall_cellCom}
	[R_{ms,F}]_{\rm OTA} = \sum_{n\in\mathcal{N}_m}\log_2 \Biggl( 1 + \frac{|h_{ms}^{n}|^2\theta_m P_m^n}{N_0 + I_k^n  } \Biggr),
\end{equation}
where the interference term $I_k^n$ is due to: a) the power used to transmit the coarse messages of MUE $m$ and b) the transmission power of other MUEs over sub-carrier $n$.

The SBS signal relayed over the wireless backhaul includes all the fine messages of the MUEs in which a rate fraction of $\nu_{s,m}(t) R_{s0}$ is allocated for the $m$-th MUE's fine message, where $\sum_{\forall m\in\mathcal{M}}\nu_{s,m} = 1$ and $\nu_{s,m} \geq 0$.
Since the uplink rate of the backhaul is interference-limited, the rate of the relayed signal using DF relaying is  the minimum rate of the MUE-SBS link and  SBS-MBS backhaul.
At the MBS, the received relayed signal and the direct signal are jointly combined using maximum ratio combining scheme \cite{book:tse05}
and the total throughput of the $m$-th MUE's fine message is:
\begin{equation}\label{eqn:relayedrate_OTA}
	[R_{m,F}]_{\rm OTA} = g \Bigl( \frac{1}{2} \min\Big\{[R_{ms,F}]_{\rm OTA},\nu_{s,m} [R_{s0}]_{\rm OTA}\Big\} \Bigr).
\end{equation}
The factor $\frac{1}{2}$ accounts for the half duplex DF relaying constraint and $g(\cdot)$ represents the joint combining operation at the MBS~\cite{book:tse05}. Therefore, the total MUE rate  is the sum of (\ref{eqn:rateMacroPriv}) and (\ref{eqn:relayedrate_OTA}):
\begin{equation}\label{eqn:rate_OTA}
	[R_{m}]_{\rm OTA} = [R_{m,C}]_{\rm OTA} + [R_{m,F}]_{\rm OTA}.
\end{equation}

\subsection{Delay formulation with over-the-air (OTA) backhaul}

The proposed method consists of three different links for each MUE $m$:
the direct communication between MUE and MBS, the relaying link of MUE-SBS, and the backhaul link between SBS-MBS.
Due to this, we will have different delay expressions for the above links.

We consider that the packet generation rate of MUE $m$ is divided between its coarse and fine messages as $\rho_{m,C}$ and $\rho_{m,F}$, respectively with $\rho_{m}=\rho_{m,C}+\rho_{m,F}$.
The delay for the direct MUE-MBS link is related to the coarse message of the MUE $m$ and the delay from the relayed path is based on the fine message.
Therefore, the respective delays for coarse and fine messages of MUE $m\in\mathcal{M}$ with the SBS $s\in\mathcal{S}$ as the relaying SBS, are given by:
\begin{align}\label{eqn:delayProposed}
	&[D_{m,C}]_{\rm OTA} = \frac{\rho_{m,C}}{2[R_{m,C}]_{\rm OTA}([R_{m,C}]_{\rm OTA}-\rho_{m,C})},\\
\label{eqn:delayProposed2}
	&\!\begin{multlined}[t][\displaywidth]
		[D_{m,F}]_{\rm OTA} = \biggl( \underbrace{\frac{\rho_{m,F}}{2[R_{ms,F}]_{\rm OTA}([R_{ms,F}]_{\rm OTA}-\rho_{m,F})}}_\text{MUE-SBS delay} + \\
		\underbrace{\frac{\rho_{m,F}}{2[R_{s0,F}]_{\rm OTA}([R_{s0,F}]_{\rm OTA}-\rho_{m,F})}}_\text{MBS-SBS delay} \biggr),
	\end{multlined}
\end{align}
where the rates $[R_{m,C}]_{\rm OTA}, [R_{ms,F}]_{\rm OTA}$ and $[R_{s0,F}]_{\rm OTA}$ are given in (\ref{eqn:rateMacroPriv}), (\ref{eqn:ratesmall_cellCom}) and (\ref{eqn:rateBackhaul}), respectively.
For a successful communication, the MBS needs to receive all the packets from the MUE in which the transmission delay depends on its largest component.
Thus the total delay of the $m$-MUE's transmission is,
\begin{equation}\label{eqn:delayProposedTotal}
	[D_{m}]_{\rm OTA}=\max\big([D_{m,C}]_{\rm OTA},[D_{m,F}]_{\rm OTA}\big).
\end{equation}

\subsection{Rate and delay formulation with wired (WRD) backhaul }

When using a wired backhaul model, the MUE transmission strategies remain analogous to the wireless backhaul case.
Therefore, the MUE rates of access links -- the links between MUEs and SBSs  and MUEs and MBS -- when using the wired backhaul model follows the same formulation as in the OTA scenario yielding:
\begin{equation}\label{eqn:accessrates_WRD}
\begin{split}
	[R_{m,C}]_{\rm WRD} &= [R_{m,C}]_{\rm OTA},\\
	[R_{m0,F}]_{\rm WRD} &= [R_{m0,F}]_{\rm OTA},\\
	[R_{ms,F}]_{\rm WRD} &= [R_{ms,F}]_{\rm OTA},
\end{split}
\end{equation}
where the defined rates represent, respectively, the coarse message overheard MBS, the fine message overheard by MBS and the fine message monitored by relaying SBS $s\in\mathcal{S}$.
The subscript ``WRD" is used to denote the wired backhaul expressions.

In the wired backhaul case, the total capacity is constrained by  $\bar{C}$ rather than by interference as in the wireless case.
This limitation subsequently affects the final rate of the relayed fine message.
Assuming the share of the capacity allocated to SBS $s$ is $C_s$ which satisfies (\ref{eqn:totwiredcapacity}), the total rate of $m$-th MUE's fine message is given by:
\begin{equation}\label{eqn:relayedrate_WRD}
	[R_{m,F}]_{\rm WRD} = g \Bigl( \frac{1}{2} \min\Big\{[R_{ms,F}]_{\rm WRD},\nu_{s,m} C_{s}\Big\} \Bigr).
\end{equation}
Function $g(\cdot)$ and variable $\nu_{s,m}$ are defined as in (\ref{eqn:relayedrate_OTA}).
Using (\ref{eqn:ratesmall_cellCom}) and (\ref{eqn:relayedrate_WRD}), the total MUE rate is calculated as follows;
\begin{equation}\label{eqn:rate_WRD}
	[R_{m}]_{\rm WRD} = [R_{m,C}]_{\rm WRD} + [R_{m,F}]_{\rm WRD}.
\end{equation}

The delay calculations follows (\ref{eqn:delayProposed})-(\ref{eqn:delayProposedTotal}) based on rate-splitting.
The resulting delay formulas for the wired scenario are as follows;
\begin{align}\label{eqn:delayProposedWRD}
	&[D_{m,C}]_{\rm WRD} = \frac{\rho_{m,C}}{2[R_{m,C}]_{\rm WRD}([R_{m,C}]_{\rm WRD}-\rho_{m,C})},\\
\label{eqn:delayProposed2WRD}
	&\!\begin{multlined}[t][\displaywidth]
		[D_{m,F}]_{\rm WRD} = \biggl( \underbrace{\frac{\rho_{m,F}}{2[R_{ms,F}]_{\rm WRD}([R_{ms,F}]_{\rm WRD}-\rho_{m,F})}}_\text{MUE-SBS delay} + \\
		\underbrace{\frac{\rho_{m,F}}{2\nu_{s,m}C_s(\nu_{s,m}C_s-\rho_{m,F})}}_\text{MBS-SBS delay} \biggr),
	\end{multlined}\\
\label{eqn:delayProposedTotalWRD}
	&[D_{m}]_{\rm WRD}=\max\big([D_{m,C}]_{\rm WRD},[D_{m,F}]_{\rm WRD}\big).
\end{align}

\section{Backhaul-Aware Relaying via Reinforcement Learning}\label{sec:game}

\subsection{Game Formulation}
In the coordinated approach,  the MUEs need to autonomously select their best helping SBS given the channel conditions, as well as the underlying backhaul constraints.
Although a particular SBS can be accessed by multiple MUEs, those MUEs still have to share the limited capacity of the SBS-MBS backhaul link.
This capacity limitation restricts the number of MUEs accessing a certain SBS. 
Therefore, the MUEs have to compete among each other in order to select suitable SBSs as their helping relays.
In this regard, we formulate a noncooperative game $\mathcal{G}= \Big(\mathcal{M},\{\mathcal{A}_m\}_{m \in \mathcal{M}},\{u_m\}_{m \in \mathcal{M}}\Big)$ in which $\mathcal{M}$ denotes the set of players (i.e., the MUEs), $\{\mathcal{A}_m\}_{m \in \mathcal{M}}$ are the action sets, and $\{u_m\}_{m \in \mathcal{M}}$ are the set of utility functions of the MUEs.
The action of each MUE $m$ is composed of  its transmission power $P_m\in[0,P_{m}^{max}]$, the fraction of power allocated for its fine message $\theta_m\in[0,1]$, and the helping relay SBS $s\in\mathcal{M}$
\footnote{ Each UE judiciously selects its transmit power, $\theta_m$, and SBS.
Thus, under the proposed approach an uplink power control by is carried out by each MUE in addition to selecting a suitable SBS.}.
For  notational simplicity, we use $a_m=(P_m,\theta_m,s)$ hereafter, for all $a_m\in\mathcal{A}_m$.
The available number of actions per MUE is given by the cardinality of the action set $|\mathcal{A}_m|$.
Furthermore, for notational simplicity, we use  $u_m=u_{m}{(a_m,\boldsymbol{a}_{-m})}$ to denote the utility function of each MUE $m$ when choosing an action $a_m\in\mathcal{A}_m$.
The vector $\boldsymbol{a}_{-m}$ represents the set of actions taken by all other MUEs (players) except $m$.
In this work, we consider a utility which is a function of throughput and delay.

To capture the tradeoff between delay and throughput, we use the notion of a \emph{system power} as introduced in~\cite{PW01}.
The system power is defined as the ratio of the throughput and the delay to the power of a sensitivity factor.
By adopting the system power metric as the utility, given that the MUEs play the actions $(a_m,\boldsymbol{a}_{-m})$, the utility of MUE $m$ is calculated given by:
\begin{equation}\label{eqn:utility_tradeoff}
	u_{m}{(a_m,\boldsymbol{a}_{-m})} = \frac{\big[R_m{(a_m,\boldsymbol{a}_{-m})}\big]^{(1-\alpha)}}{\big[D_m{(a_m,\boldsymbol{a}_{-m})}\big]^{\alpha}},
\end{equation}
where $\alpha\in[0,1]$ is the sensitivity parameter reflecting the throughput-delay tradeoff.
Hereinafter, for simplicity of exposition, we will drop the dependence on the actions from the notation $R_m{(a_m,\boldsymbol{a}_{-m})}$ and $D_m{(a_m,\boldsymbol{a}_{-m})}$, and we will refer to the rate and delay simply by $R_m$ and $D_m$ respectively.

Since the system power depends on rate and delay, the above utility is influenced by the over-the-air link rates as well as the backhaul condition of the relaying SBSs.
Thus, the knowledge of the backhaul is crucial to improve the performance of MUEs when selecting SBSs as relay nodes.
With this in mind, we examine the formulated problem in both perfect and imperfect information cases.

\subsection{Perfect information}\label{sec:perfect_info}

Here, we assume that the knowledge regarding the existence of SBSs, the backhaul condition of each SBS, and the actions played by all MUEs are known to each and every MUE in the network.
We refer to this case as  \emph{perfect information}.
Based on this, each MUE selects an action with a given probability (\emph{mixed strategy}).
With perfect information, the MUEs choose specific strategies known as the optimal strategies where any deviation offers negligible utility gain for all MUEs.
The above state in which no MUE desires to change its strategy is known as an \emph{equilibrium} of the game $\mathcal{G}$.
Since the set of strategies $(\mathcal{A}_m;|\mathcal{A}_m|<\infty,\; \forall m\in\mathcal{M})$ is finite and discrete, the game $\mathcal{G}$ holds at least one equilibrium in mixed strategies \cite{pap:nash51,pap:bennis12,pap:rose11}.
In this regard, we define the \emph{$\epsilon$-coarse correlated equilibrium} ($\epsilon$-CCE) as follows \cite{pap:bennis12,pap:rose11};

\begin{definition}
	({\it $\epsilon$-coarse correlated equilibrium}):
	A mixed strategy probability $\boldsymbol{\pi}_m = \big( \pi_{m,a_m^1}, \ldots, \pi_{m,a_m^{|\mathcal{A}_m|}} \big)$ is  an $\epsilon$-coarse correlated equilibrium if, $\forall m\in\mathcal{M}$ and $\forall a_m^\prime\in\mathcal{A}_m$, where:
\begin{multline*}
	\sum_{\forall \boldsymbol{a}_{-m}\in\mathcal{A}_{-m}} \Biggl( u_m{(a_m',\boldsymbol{a}_{-m})} {\pi}_{-m,\boldsymbol{a}_{-m}} \Biggr)  \\ 
	- \sum_{\forall a_m\in\mathcal{A}_m} \Biggl( u_m{(a_m,\boldsymbol{a}_{-m})}{\pi}_{m,a_m} \Biggr) \leq \epsilon,
\end{multline*}
	where ${\pi}_{-m,\boldsymbol{a}_{-m}}=\sum_{\forall a_m\in\mathcal{A}_m}\pi(a_m,\boldsymbol{a}_{-m})$ is the marginal probability distribution w.r.t. $a_m$.
The mixed-strategy probability distribution of MUE $m$, $\boldsymbol{\pi}_m = \big( \pi_{m,a_m^1}, \ldots, \pi_{m,a_m^{|\mathcal{A}_m|}} \big)$, is defined as,
\begin{align}
	&\pi_{m,a_m^l} = \mbox{Pr}(a_m=a_m^l),
\end{align}
with, $\sum_{\forall a_m^l\in\mathcal{A}_m}\pi_{m,a_m^l} = 1$, and $a_m^l$ is the $l$-th action from the set of actions $\mathcal{A}_m$ with $l\in\{1,\ldots,|\mathcal{A}_m|\}$.
\end{definition}

The motivation behind introducing the $\epsilon$-CCE is the relation between the regret matching technique -- an iterative learning mechanism -- and the existence of the $\epsilon$-CCE.
Authors in \cite{pap:hart97} claim that if all the players in a game follow the adaptive procedure of regret matching, then almost surely the game converges to the $\epsilon$-CCE. 
The regret matching mechanism allows players in a game to explore all their actions and learn the optimal strategies over time.
Moreover, the correlated equilibrium is a generalization of Nash equilibrium which allows players to coordinate their actions in order to provide better overall performance compared to the Nash approach~\cite{pap:rose11,pap:bennis12}. 
Therefore, the correlated equilibrium is more relevant in the context of decentralization and dense networks over the Nash equilibrium.

Using the regret matching technique for our considered scenario with perfect information, every MUE $m$ calculates its regret for an action $a_m^l\in\mathcal{A}_m$ at a given time instance $t$ as follows:
\begin{equation}\label{eqn:regretFull}
	r_{m,a_m^{l}}(t) = \frac{1}{t}\sum_{n=1}^t\biggl(u_m^{(n)}{(a_m^l,\boldsymbol{a}_{-m})} - u_{m}^{(n)}{(a_m,\boldsymbol{a}_{-m})}\biggr),
\end{equation}
where $(a_m,\boldsymbol{a}_{-m})$ are the actions played by the all MUEs at time $t$ and $u_m^{(n)}(\cdot)$ is the utility of MUE $m$ at time $n$.
The regret vector for MUE $m$ for actions $(a_m^1,\ldots,a_m^{|\mathcal{A}_m|})$ at time $t$ is $\boldsymbol{r}_m(t)=\big(r_{m,a_m^1}(t),\ldots,r_{m,a_m^{|\mathcal{A}_m|}}(t)\big)$.
A positive regret for a given action ($r_{m,a_m^l}(t)>0$) implies that the MUE could have obtained a higher payoff by playing this action during previous time instants, while a negative or zero regret implies that the MUE has no regret for playing that action.
Therefore, for the calculated regrets $\boldsymbol{r}_m(t)$, MUE $m$ tends to select the action with highest regret in which the mixed strategy probabilities are given as follows;
\begin{equation}\label{eqn:regretFullprob}
	\pi_{m,a_m^{l}}(t) = \frac{r_{m,a_m^l}^+(t)}{\displaystyle\sum_{\forall a_m^{l'}\in\mathcal{A}_m} r_{m,a_m^{l'}}^+(t)},
\end{equation}
where $r_{m,a_m^l}^+(t)=\max\big(0,r_{m,a_m^l}(t)\big)$ and  $\boldsymbol{\pi}_m(t)=\big(\pi_{m,a_m^{1}}(t),\ldots,\pi_{m,a_m^{|\mathcal{A}_m|}}(t)\big)$ is the mixed strategy probability profile of MUE $m$ at time $t$.

Under perfect information, the MUEs compute their utilities for all actions by observing the actions played by the other  MUEs at any time instant and achieve the CCE ($\epsilon=0$) with the above regret matching mechanism \cite{pap:rose11,pap:hart97}.
Clearly, this requires a significant amount of information which is often unavailable or complex to gather in a practical wireless network.
Therefore, it is of interest to extend this approach to handle cases in which only partial information is available at the MUEs.
Next, we propose a distributed solution which only relies on individual (and possibly noisy) feedbacks of MUE rates to optimize its utility function. 
In the sequel, we refer to this case as imperfect information.

\subsection{Imperfect information }\label{subsec:learn}

Here, the MUEs have no information about the actions of other MUEs in the network, i.e. there is no information exchange among all the MUEs, MBS and SBSs.
Due to this lack of information, the MUEs are unable to update their mixed strategies by calculating their regrets as per (\ref{eqn:regretFull}).
In order to overcome this problem, we propose a distributed learning mechanism in which the MUEs coordinate their own transmissions in an implicit manner and without information exchange.
This coordination procedure only relies on a feedback of MUEs' individual rates sent by the MBS and/or SBSs.
In particular, at each time instant, MUEs autonomously choose their own actions, receive a feedback, and build a probability distribution function over their transmission strategies.
The feedback received by every MUE is used to estimate its utility, and subsequently  compute its regret for playing a given action.
This procedure is carried out till convergence to the $\epsilon$-CCE.

For any time instant $t$ and for all $m\in\mathcal{M}$, MUE $m$ selects an action from $\mathcal{A}_m$ following the probability distribution available from the previous stage $t-1$,
\begin{equation}\label{eqn:probability}
	\boldsymbol{\pi}_m(t-1) = \big( \pi_{m,a_m^{1}}(t-1), \ldots, \pi_{m,a_m^{|\mathcal{A}_m|}}(t-1) \big).
\end{equation}
After all MUEs select and play their actions, each of them receives feedbacks of their utilities given by:
\begin{equation}\label{eqn:feedback}
	\tilde{u}_m(t) = u_{m}(t) + n_m,\; \forall m\in\mathcal{M},
\end{equation}
where $u_{m}(t)$ is the actual value of the utility as measured at the MBS and $n_m$ is the additive noise with zero mean.
In detail, each MUE $m\in\mathcal{M}$ estimates its regret $\tilde{\boldsymbol{r}}_m(t)=\big(\tilde{r}_{m,a_m^1},\ldots,\tilde{r}_{m,a_m^{|\mathcal{A}_m|}}\big)$ for all actions based on the accumulated history, given in (\ref{eqn:probability}) and (\ref{eqn:feedback}).
In addition, the MUEs must balance between the actions yielding higher regrets and exploring other actions with lower regrets with non-zero probability \cite{pap:bennis11,pap:bennis12}.
Such a behavior is captured by the so-called Boltzmann-Gibbs (BG) distribution $\boldsymbol{\beta}_{m}\big(\tilde{\boldsymbol{r}}_m(t)\big)$ which is the solution of the following optimization problem:
\begin{multline}\label{eqn:BGopt}
	\boldsymbol{\beta}_{m}\big(\tilde{\boldsymbol{r}}_m(t)\big) \in \arg \max_{\boldsymbol{\pi}_m\in\Delta(\mathcal{A}_m)} \Bigl[ \sum_{\forall a_m\in\mathcal{A}_m} \Big\{ \pi_{m,a_m}\tilde{r}_{m,a_m}^+(t) \\
	- \kappa_m\pi_{m,a_m}\ln(\pi_{m,a_m}) \Big\} \Bigr]
\end{multline}
where
$\kappa_m>0$ is a temperature parameter which balances between exploration and exploitation.

It is worth noting that allowing $\kappa_m\rightarrow 0$ for all $m\in\mathcal{M}$ leads to maximizing the sum of regrets $\sum_{\forall a_m\in\mathcal{A}_m} \bigl\{ \pi_{m,a_m}\tilde{r}_{m,a_m}^+(t) \bigl\}$.
The result is the mixed strategy in which MUEs have exploited the actions with higher regrets at the time period $t$.
This can lead to actions having zero probability and, consequently, MUE $m$ will no longer attempt to choose such actions in the future.
Conversely, maximizing the entropy ($\sum_{\forall a_m\in\mathcal{A}_m} \bigl\{ -\pi_{m,a_m}\ln(\pi_{m,a_m}) \bigl\}$) alone by allowing $\kappa_m\rightarrow\infty$, for all $m\in\mathcal{M}$, will result in an uniform distribution over the action set in which each action is equally  played.
By judiciously combining the mixed-strategy regret and the entropy regulated by the temperature parameter $\kappa_m$, we obtain a mixed strategy probability that exploits certain actions to maximize the expected utility while providing an opportunity to explore the rest of the actions.
As a result, users maximize their long-term utility metric $\overline{u}_{m}(t)=\frac{1}{t}\sum_{n=1}^{t}u_m(n)$.

For a given set of $\tilde{\boldsymbol{r}}_m(t)$ and $\kappa_m$ values, we solve the continuous and strictly concave optimization problem given in (\ref{eqn:BGopt}). The resulting probability distribution for MUE $m$ is given as follows:
\begin{equation}\label{eqn:BG}
	\beta_{m,a_m}\big(\tilde{\boldsymbol{r}}_m(t)\big) = \frac{\exp\big(\frac{1}{\kappa_m}\tilde{r}^+_{m,a_m}(t)\big)} {\sum_{\forall \acute{a}_m\in\mathcal{A}_m}\exp\big(\frac{1}{\kappa_m}\tilde{r}^+_{m,\acute{a}_m}(t)\big)} ,~\forall a_m\in\mathcal{A}_m
\end{equation}
where $\beta_{m,a_m}\big(\tilde{\boldsymbol{r}}_m(t)\big)$ is the element of $\boldsymbol{\beta}_{m}\big(\tilde{\boldsymbol{r}}_m(t)\big)$ related to the action $a_m$.

Let $\hat{u}_{m,a_m^l}(t)$ be the estimated utility of MUE $m$ at time $t$ for action $a_m^l$ and the estimated utility vector for all the set of actions is $\hat{u}_{m}(t)=\big(\hat{u}_{m,a_m^1}(t),\ldots,\hat{u}_{m,a_m^{|\mathcal{A}_m|}}(t)\big)$.
Similarly, the regret estimation vector of MUE $m$ for each action is $\tilde{\boldsymbol{r}}_{m}(t)=\big(\tilde{r}_{m,a_m^1}(t),\ldots,\tilde{r}_{m,a_m^{|\mathcal{A}_m|}}(t)\big)$ which is used to calculate the BG distribution in (\ref{eqn:BG}).
At each time instant, given that the action played by MUE $m$ at time $t$ is $a_m$ is governed by (\ref{eqn:probability}), the estimations of utility, regrets and probability distribution functions caried out by each MUE $m$ are updated for all $a_m^l\in\mathcal{A}_m$ as follows:
\begin{equation}\label{eqn:algoUpdates}
\begin{cases}
	\hat{u}_{m,a_m^l}(t) &= \hat{u}_{m,a_m^l}(t-1)  \\ 
	&\hfill + \lambda_m(t)\mathds{1}_{\{a_m^l=a_m\}} \Bigl(\tilde{u}_m(t)-\hat{u}_{m,a_m^l}(t-1)\Bigr),\\
	\tilde{r}_{m,a_m^l}(t) &= \tilde{r}_{m,a_m^l}(t-1)  \\
	&\hfill +\gamma_m(t) \Bigl(\hat{u}_{m,a_m^l}(t)-\tilde{u}_m(t)-\tilde{r}_{m,a_m^l}(t-1)\Bigr),\\
	\pi_{m,a_m^l}(t) &= \pi_{m,a_m^l}(t-1) \\
	&\hfill + \mu_m(t) \Bigl(\beta_{m,a_m^l}\big(\tilde{\boldsymbol{r}}_m(t)\big)-\pi_{m,a_m^l}(t-1)\Bigr),
\end{cases}
\end{equation}
where the learning rates $\lambda_m(t), \gamma_m(t)$ and $\mu_m(t)$ satisfy following conditions;
\small
\begin{multline}\label{eqn:learningRateConditions}
	(i) \quad \displaystyle\lim_{t\rightarrow\infty}\sum_{n=1}^t\lambda(n) = +\infty, \quad   \displaystyle\lim_{t\rightarrow\infty}\sum_{n=1}^t\gamma(n) = +\infty \\
	 \mbox{and} \quad \displaystyle\lim_{t\rightarrow\infty}\sum_{n=1}^t\mu(n) = +\infty.  
\end{multline}
\vspace{-7 mm}
\begin{multline}
	(ii) \quad \displaystyle\lim_{t\rightarrow\infty}\sum_{n=1}^t\lambda^2(n) < +\infty, \quad \displaystyle\lim_{t\rightarrow\infty}\sum_{n=1}^t\gamma^2(n) < +\infty, \\
	\mbox{and} \quad \displaystyle\lim_{t\rightarrow\infty}\sum_{n=1}^t\mu^2(n) < +\infty.
\end{multline}
\normalsize
\vspace{-7 mm}
\begin{multline}
(iii) \quad \displaystyle\lim_{t\rightarrow\infty}\frac{\gamma(t)}{\lambda(t)} = 0 \quad \mbox{and} \quad \displaystyle\lim_{t\rightarrow\infty}\frac{\mu(t)}{\gamma(t)} = 0.
\end{multline}


\begin{proposition}\label{thm:converge}
	\emph{(Convergence)}:
	The learning algorithm presented in (\ref{eqn:algoUpdates}) converges to the $\epsilon$-CCE if and only if for all $m\in\mathcal{M}$ the conditions (\ref{eqn:learningRateConditions}) are satisfied and,
 \begin{align}\label{eqn:convg}
	&\lim_{t\rightarrow\infty}\boldsymbol{\pi}_m(t) = \boldsymbol{\pi}_m^*, \\
	&\lim_{t\rightarrow\infty}\overline{u}_{m}(t) = \breve{u}_m^{(\boldsymbol{\pi}_m^*,\boldsymbol{\pi}_{-m}^*)},
\end{align}
where $\boldsymbol{\pi}^*=\big(\boldsymbol{\pi}_1^*,\ldots,\boldsymbol{\pi}_M^*\big)$ is the $\epsilon$-CCE strategy profile of the game
$\mathcal{G}= \Big(\mathcal{M},\{\mathcal{A}_m\}_{m \in \mathcal{M}},\{u_m\}_{m \in \mathcal{M}}\Big)$
 and $\breve{u}_m^{(\boldsymbol{\pi}_m^*,\boldsymbol{\pi}_{-m}^*)}$ is the optimal expected time-averaged utility of MUE $m\in\mathcal{M}$.
\end{proposition}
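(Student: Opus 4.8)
\emph{Proof sketch.}
The plan is to read the three lines of~(\ref{eqn:algoUpdates}) as one coupled stochastic-approximation recursion running on three separated time scales, and then to peel the scales off one at a time. First I would rewrite each update in the canonical Robbins--Monro form $x(t)=x(t-1)+a(t)\big(F(x(t-1))+M(t)\big)$, where $M(t)$ is a martingale-difference term collecting both the zero-mean measurement noise $n_m$ from~(\ref{eqn:feedback}) and the sampling noise produced by drawing $a_m$ from $\boldsymbol{\pi}_m(t-1)$. Conditions~(i)--(ii) of~(\ref{eqn:learningRateConditions}) are exactly the step-size requirements $\sum_t a(t)=\infty$, $\sum_t a^2(t)<\infty$ that let each such recursion track its limiting ODE while averaging out $M(t)$; condition~(iii), $\gamma/\lambda\to 0$ and $\mu/\gamma\to 0$, supplies the time-scale separation, so that from the viewpoint of the $\lambda$-recursion the variables $\tilde{\boldsymbol{r}}_m$ and $\boldsymbol{\pi}_m$ are quasi-static, and from the viewpoint of the $\gamma$-recursion $\boldsymbol{\pi}_m$ is quasi-static. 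This is the standard multi-time-scale stochastic-approximation setting used for learning in games (see~\cite{pap:perlaza10,pap:bennis11,pap:bennis12}).

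Next I would analyze the three limiting ODEs in order of speed. On the fastest scale the $\hat u$-update is an asynchronous running average; because the Boltzmann--Gibbs rule~(\ref{eqn:BG}) puts strictly positive probability on every action, each $a_m^l$ is played infinitely often, and $\hat u_{m,a_m^l}(t)$ converges to the conditional expected utility $\mathbb{E}_{\boldsymbol{\pi}_{-m}}\!\big[u_m(a_m^l,\boldsymbol{a}_{-m})\big]$ at the frozen profile. On the intermediate scale, with $\hat u$ equilibrated and $\boldsymbol{\pi}$ frozen, the $\tilde r$-ODE is linear and stable with rest point $\tilde r_{m,a_m^l}=\mathbb{E}_{\boldsymbol{\pi}_{-m}}[u_m(a_m^l,\boldsymbol{a}_{-m})]-\overline u_m$, which is precisely the expected regret of~(\ref{eqn:regretFull}) viewed as a function of the strategy profile. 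On the slowest scale we are left with $\dot{\boldsymbol{\pi}}_m=\boldsymbol{\beta}_m\big(\tilde{\boldsymbol{r}}_m(\boldsymbol{\pi})\big)-\boldsymbol{\pi}_m$. Here I would use that $\boldsymbol{\beta}_m(\cdot)$ is Lipschitz (being the smooth unique solution of the strictly concave program~(\ref{eqn:BGopt})) and that the regret map is affine in $\boldsymbol{\pi}_{-m}$; a Lyapunov argument then shows this ODE has an interior rest point $\boldsymbol{\pi}^*$ that is globally asymptotically stable, which gives~(\ref{eqn:convg}), and since the Cesàro average of the realized utilities inherits the same limit it also yields $\lim_t\overline u_m(t)=\breve{u}_m^{(\boldsymbol{\pi}_m^*,\boldsymbol{\pi}_{-m}^*)}$.

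It remains to identify $\boldsymbol{\pi}^*$ with an $\epsilon$-CCE and to pin down $\epsilon$. Since $\boldsymbol{\pi}^*$ is a fixed point of $\boldsymbol{\beta}_m\big(\tilde{\boldsymbol{r}}_m(\cdot)\big)$, writing the optimality (KKT) conditions of~(\ref{eqn:BGopt}) at $\boldsymbol{\pi}^*$ bounds the positive part of every regret component by a quantity of order $\kappa_m\ln|\mathcal{A}_m|$; summing the per-action regrets then reproduces the inequality defining the $\epsilon$-coarse correlated equilibrium with $\epsilon=\max_{m\in\mathcal{M}}\kappa_m\ln|\mathcal{A}_m|$, so $\boldsymbol{\pi}^*$ is an $\epsilon$-CCE and $\epsilon\downarrow 0$ as the temperatures $\kappa_m\downarrow 0$. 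For the ``only if'' direction I would observe that dropping any requirement in~(\ref{eqn:learningRateConditions}) breaks the argument: failing $\sum_t a(t)=\infty$ freezes a recursion away from its rest point, failing $\sum_t a^2(t)<\infty$ prevents the noise from averaging out, and failing~(iii) destroys the quasi-static decoupling so the limiting ODEs above no longer describe the dynamics.

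The step I expect to be the crux is the global asymptotic stability of the slowest ODE $\dot{\boldsymbol{\pi}}_m=\boldsymbol{\beta}_m\big(\tilde{\boldsymbol{r}}_m(\boldsymbol{\pi})\big)-\boldsymbol{\pi}_m$: without the Boltzmann--Gibbs smoothing the driving map is discontinuous and one can only conclude convergence to a set, so the whole argument leans on the Lipschitz bound for $\boldsymbol{\beta}_m$ and on taking $\kappa_m$ large enough, relative to the Lipschitz constant of the regret map, for the composed map to behave like a contraction.
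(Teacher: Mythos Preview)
Your proposal follows the right general framework---multi-time-scale stochastic approximation with ODE limits---and is considerably more detailed than the paper's own argument. The paper's proof (Appendix~\ref{sec:appndx}) is much terser: it restricts to the case $\alpha=0$ (so the utility is pure throughput, $\tilde u_m=\log_2(1+\mathrm{SINR})$), verifies directly that this map is Lipschitz via an elementary bound on $\big|\log_2(1+x)-\log_2(1+y)\big|/|x-y|$, argues separately that the Boltzmann--Gibbs map $\beta_{m,a_m}(\cdot)$ is Lipschitz, and then simply invokes an external stochastic-approximation result (Equation~(7) and Proposition~4.1 of Bena\"im~\cite{notes:benaim99}) to conclude convergence. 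In other words, the paper treats the three-time-scale machinery you carefully unpack as a black box and only checks the hypotheses needed to call it.

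What your route buys is an explicit identification of $\epsilon$ (of order $\kappa_m\ln|\mathcal{A}_m|$), a discussion of the ``only if'' direction, and a treatment that does not a priori restrict to $\alpha=0$; none of these appear in the paper's proof. What the paper's route buys is brevity: by restricting to throughput and computing the Lipschitz constant of $\log_2(1+\cdot)$ by hand, it sidesteps your acknowledged crux---global asymptotic stability of the slow ODE $\dot{\boldsymbol{\pi}}_m=\boldsymbol{\beta}_m(\tilde{\boldsymbol{r}}_m(\boldsymbol{\pi}))-\boldsymbol{\pi}_m$---by delegating it entirely to the cited result. Your Lyapunov/contraction sketch for that step is plausible but would need to be made precise; the paper does not attempt this and simply cites.
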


\begin{proof}
See Appendix \ref{sec:appndx}.
\end{proof}

\section{Simulation Results}\label{sec:results}

\begin{table}[!t]
\caption{Simulation parameters.}
\begin{center}
\begin{tabular}{l c}
\hline
{\bf Parameter} & {\bf Value} \\
\hline \hline
Carrier frequency	& $1.85$ GHz \\
System bandwidth	& $5$ MHz \\
Thermal noise ($N_0$) &$-174$ dBm/Hz \\
Wired backhaul capacity ($\overline{C}$) & 50 Mbps\\
Packet generation rate ($\rho$) & 180 kbps\\
UE transmission power & 21 dBm \\
SBS transmission power & 30 dBm \\
\hline \multicolumn{2}{c}{\bf Macro and Small cells} \\ \hline
Macrocell, small cell radius & 400 m, 50m \\
Sectors per macrocell & 3 \\
Maximum MUEs per sector & 30 \\
Maximum SBSs per sector & 15 \\
\hline \multicolumn{2}{c}{\bf Path loss models} \\ \hline
MUE - MBS & $15.3 + 37.6\log(d)$\\
MUE - SBS & $15.3 + 37.6\log(d)$\\
Shadowing & 10 dB \\
\hline \multicolumn{2}{c}{\bf Learning} \\ \hline
Boltzmann temperature ($\kappa$) & 10\\
Sensitivity between rate and delay ($\alpha$) & 0.5 \\
learning rates: $\lambda(t), \gamma(t)~\mbox{and}~\mu(t)$ & $\frac{1}{(t+1)^{.50}}, \frac{1}{(t+1)^{.55}}~\mbox{and}~\frac{1}{(t+1)^{.60}}$ \\
\hline
\end{tabular}
\end{center}
\label{tab:sim_para}
\end{table}

We consider a single macrocell with radius $R_m=400$ m which is divided into three $120^{\circ}$ sectors. Within each sector,  a number of  MUEs and SBSs with radius $R_s=50$~m  are randomly deployed.
The maximum uplink transmission powers for MUEs and SBSs are set to $21$~dBm and $30$~dBm, respectively~\cite{onln:3gpp06}.
We use the 3GPP specifications for pathloss and shadowing in outdoor links with the noise level set to -$174$~dBm~\cite{onln:3gpp06}.
A detailed list of the simulation parameters are given in Table \ref{tab:sim_para}.
All statistical results are averaged over the random channel variations and locations of the users and base stations.

Throughout the simulations, we study the performance of the proposed learning mechanism, and compare the results with four baseline algorithms as described in Table \ref{tab:models}.

\begin{table}[!t]
\centering
\renewcommand\multirowsetup{\centering}
\caption{Proposed and Benchmark Algorithms}
\label{tab:models}
\begin{tabular}{p{1 cm} p{2.25 cm} p{4.35 cm}}
\cmidrule[0.75 pt]{2-3}
& \makebox[2.25 cm][c]{{\bf Model}} & \makebox[4.25 cm][c]{{\bf Description}} \\
\cmidrule[0.75 pt]{2-3}
\ldelim \{ {12}{1 cm}[\parbox{1 cm}{Proposed Models}]
&
Proposed learning approach with full information (RS-F) & The proposed mechanism with the assumption of full information availability. The solution is governed using (\ref{eqn:regretFull}) and (\ref{eqn:regretFullprob}). \\
& 
Proposed learning approach with imperfect information (RS-L) & The proposed learning approach with the imperfect information availability where the MUEs have no knowledge about the rest of MUEs, their actions and the capacity of the backhaul network. The MUEs learn their best strategies only with the aid of feedbacks of their own data rates. \\
\cline{2-3} 
\ldelim \{ {25}{1 cm}[\parbox{1 cm}{Baseline Models}]
&
Classical approach (CLA) & This is the classical two-tier network scenario, in which there is no coordination among macro- and small cell tiers. The objective function of every MUE is given in (\ref{eqn:noncop_mbsrate}) and (\ref{eqn:noncop_SBSrate}). \\
& 
Reuse-1 approach (RU-1) & Similar to CLA, the reuse-1 approach considers no cooperation between two-tiers. Here, MUEs uniformly split their transmit power over the entire bandwidth. \\
& Offloading with full information (OF-F) & Here, some MUEs are fully handed over to the SBSs in order to maximize their individual data rates. The rationale of  OF-F is to compare the performance of the rate-splitting technique with proposed learning method to the classical offloading. \\
& Satisfaction based learning (SAT) & This learning algorithm is proposed in \cite{pap:meyem13}. Here, every MUE is not interested in maximizing its utility function, but rather being satisfied based on achieving a predefined satisfaction level. This is used to compare the learning capability of the proposed RS-L method.\\
\cmidrule[0.75 pt]{2-3}
\end{tabular}
\end{table}


\subsection{Performance comparison}

\begin{figure}[!t]
\centering
\includegraphics[width=.5\textwidth]{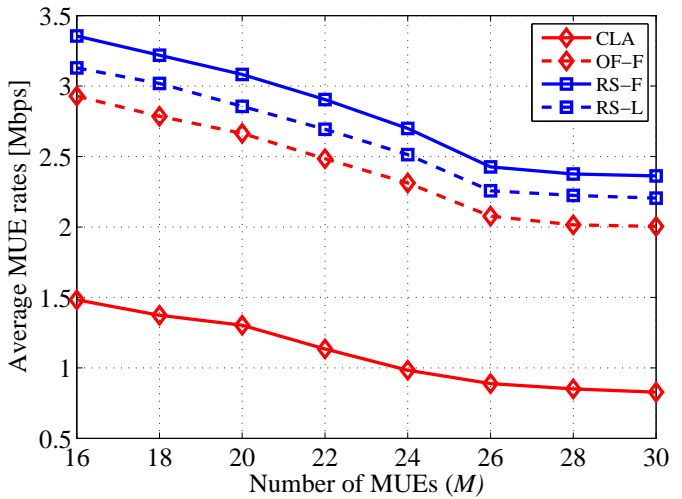}
\caption{Average MUE rate as a function of the number of MUEs ($S=8$, Wireless backhaul).}
\label{fig:rateMUEchng}
\end{figure}

Fig.~\ref{fig:rateMUEchng} shows the average rate per MUE as the number of MUEs increases.
The results are obtained for the proposed approach with perfect and imperfect information availability (RS-F and RS-L) and compared to the CLA and OF-F methods with an over-the-air backhaul.
The offloading method with perfect information (OF-F) achieves almost twice the rates compared to the classical implementation, while the proposed method yields an additional improvement of $10\%$.
Fig.~\ref{fig:rateMUEchng} shows that the proposed RS-L approach with limited information achieves comparable rates with the perfect information case.
As the number of MUEs increases, a drop in the average MUE rates can be seen for all four cases.
The reason is that the shared spectrum with fixed amount of sub-carriers suffers from the additional interference due to the increased number of MUEs.
More interference decreases the signal-to-interference-and-noise-ratio resulting in lower throughput for all MUEs.

\begin{figure}[!t]
\centering
\includegraphics[width=.5\textwidth]{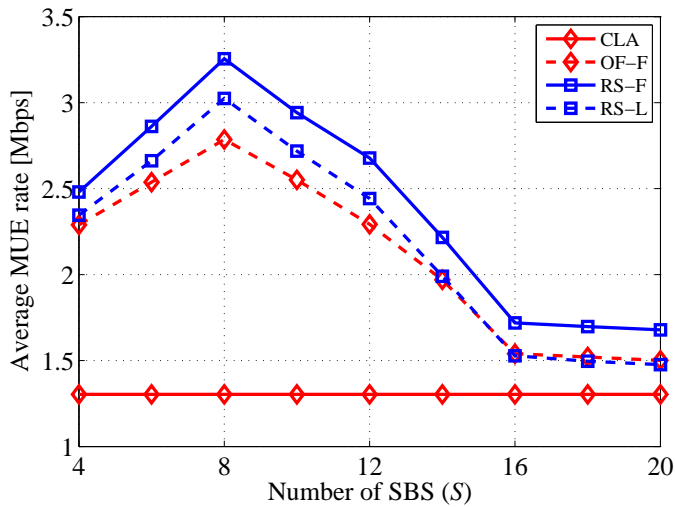}
\caption{Average MUE rate as a function of the number of SBSs ($M=20$, Wireless backhaul).}
\label{fig:rateSBSchng}
\end{figure}

Fig.~\ref{fig:rateSBSchng} shows the average MUE rates as the number of SBSs varies for a network with $N=16$ sub-carriers where $8$ sub-carriers allocated for MUEs and the rest are allocated for the backhaul.
In the absence of cross-tier cooperation, changing the number of SBSs has no impact on the performance.
In contrast in Fig.~\ref{fig:rateSBSchng} we see that the network size affects the proposed and offloading approaches.
As the number of SBSs increases, the MUEs have more opportunities to use certain SBSs as relays, increasing thereby MUEs transmission rates.
Fig.~\ref{fig:rateSBSchng} shows that the OF-F and RS-F schemes yield a significant rate improvement compared to the CLA reaching up to 2.5 times improvement with $S=8$.
However, as the number of SBSs increases, the MUE rates start to decrease.
This decrease is mainly due to the interference limitations over the wireless backhaul.
In Fig. 3, we see that the proposed approach with perfect information, RS-F, yields a notable advantage compared to OF-F, reaching up to $10\%$ of improvement at $S>8$, in terms of MUE rates.
This result demonstrates that the proposed approach is able to better handle the limitations of the backhaul.
In addition, Fig.~\ref{fig:rateSBSchng} shows that the proposed distributed method with imperfect information, RS-L, achieves around $96\%$ improvement in terms of MUE rates, compared to its upper bound which is RS-F.


\subsection{Backhaul Impact}

In order to compare the backhaul models, we plot the cumulative density functions (CDFs) of rates and delays.
We consider two reference models: CLA and RU-1, along with the proposed approach.
The two backhaul models are described in Section \ref{sec:network_model} and the rate-splitting with wireless backhaul is referred to as ``RS-OTA" while ``RS-WRD'' represents the wired backhaul.
Furthermore, we assume a hybrid backhaul in which both the wired and wireless backhauls are dynamically selected.
The simulation carried out for the system with rate-splitting technique and a hybrid backhaul is referred to as ``RS-HYB" hereafter.
In order to evaluate the proposed learning algorithm, the satisfaction based learning approach proposed in \cite{pap:meyem13} with a hybrid backhaul (SAT-HYB) is used as the third baseline model.

\begin{figure}[!t]
\centering
\subfigure[Cumulative density function of MUEs rates.]{
	\includegraphics[width=0.5\textwidth]{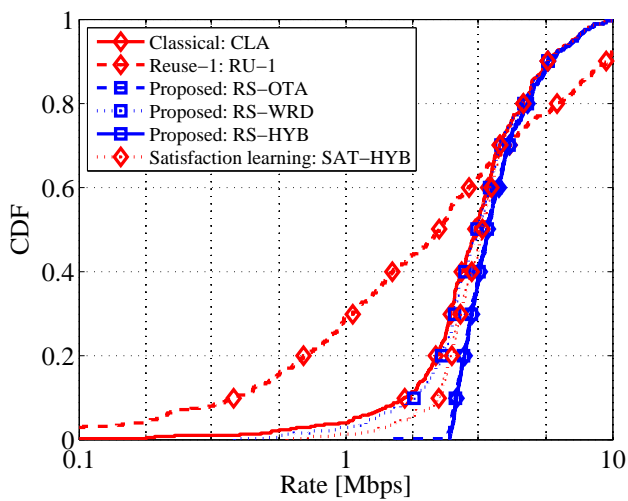}
	\label{fig:rate_cdf}
}
\hfil
\subfigure[Cumulative density function of MUE delays.]{
	\includegraphics[width=0.475\textwidth]{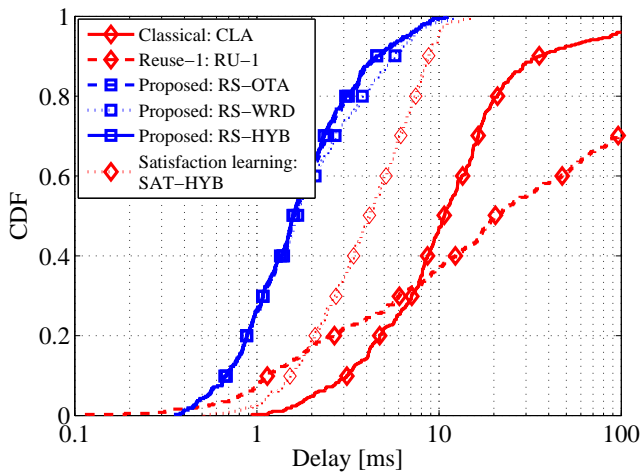}
	\label{fig:delay_cdf}
}
\caption{Cumulative density functions of MUE rates and delays comparing six schemes classical (CLA), reuse-1 (RU-1), proposed rate splitting with wireless (OTA), wired (WRD) and hybrid (HYB) backhaul networks, and satisfaction based learning with hybrid backhaul (SAT-HYB) for the 10\% best effort MUEs.}
\label{fig:cdfs}
\end{figure}

Fig.~\ref{fig:rate_cdf} and Fig.~\ref{fig:delay_cdf} show the rate and delay CDFs achieved for a network with $M=15$ and $S=8$.
We consider the best effort MUEs -- the best 10\% -- for both plots in Fig.~\ref{fig:cdfs}.
Using the RU-1 method, we can see that the best effort MUEs exhibit large rates and delay variations.
In contrast, using the CLA and proposed techniques, most MUEs are able to achieve a minimum rate level and low delays.
For both CLA and RS-WRD, the best effort MUEs have almost equivalent rates and the respective average rates are $3.61$ and $3.66$ Mbps.
The best effort users using RS-OTA achieve higher rates with a mean of $3.96$ Mbps while the average rate achieved with RS-HYB is $3.98$ Mbps.
Clearly, Fig.~\ref{fig:rate_cdf} shows that, in terms of rates, RS-HYB outperforms all the other techniques.
However, in terms of delays, the proposed approach with any type of backhaul achieves a similar distribution for best effort users.
The corresponding average delays for RS-OTA, RS-WRD and RS-HYB are $1.6, 1.8$ and $1.6$ ms.
In this respect, we can see that these proposed approaches exhibit a significant reduction in delay when compared to the CLA case which achieves an average delay of $34$ ms.
This is due to the fact that the proposed approach allows the MUEs to achieve higher rates and lower delays by leveraging nearby SBSs as relays and suitable backhaul condition.
The two curves -- RS-HYB and SAT-HYB -- allow us to compare the proposed learning algorithm to the satisfaction based learning mechanism.
Both rate and delay curves show that the proposed solution allows the MUEs to achieve higher rates with lower delays.
This is due to the fact that the MUEs in RS-HYB always attempt to improve their utility while MUEs from SAT-HYB aim to achieve a fixed, pre-determined satisfaction level with the system power of 5 for all $m\in\mathcal{M}$.
Furthermore, we note that Fig.~\ref{fig:cdfs} shows that by smartly exploiting the hybrid backhaul via the proposed RS-HYB method, significant performance gains can be achieved as opposed to the benchmark methods.

\begin{figure}[ht]
\centering
\includegraphics[width=.5\textwidth]{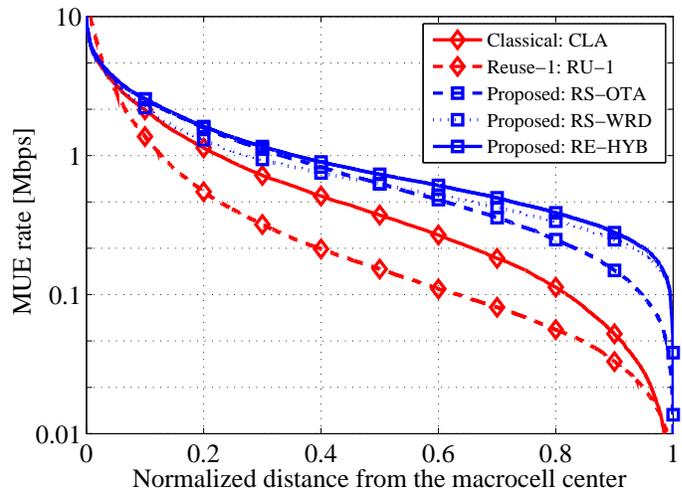}
\caption{Average MUE throughput as a function of the distance from MBS to MUE.}
\label{fig:coverage}
\end{figure}

In Fig. \ref{fig:coverage}, we further assess the performance of the proposed approach under different locations for the MUEs.
X-axis represents the normalized distance from the MBS (or cell center) to any MUE while the y-axis displays the average throughput achieved by the MUE at a given distance.
An MUE close to the MBS, can achieve higher rates for all five approaches due to the high channel gain over the MUE-MBS link.
However, as the MUE moves away from the MBS, the channel gain of MUE-MBS link degrades proportional to the distance based on the path loss model resulting in decreased rates.
Moreover, due to its ability to assign MUEs over different sub-carriers, the CLA schemes achieves a higher throughput than the RU-1 approach which allows MUEs to transmit over all sub-carriers. 
As the MUE moves toward the cell edge, the proposed learning methods -- RS-OTA, RS-WRD and RS-HYB -- yield a significant performance advantage as they allow to leverage the use of SBSs as relays.
In particular, as the MUE-SBS link quality improves, the MUEs are able to achieve a higher rate for their fine message, thus yielding an increased performance advantage for the proposed schemes.

In addition, Fig.~\ref{fig:coverage} shows the tradeoff between wired and wireless backhauls.
Since the wired capacity is fixed and independent from the distance between the SBS and the MBS, 
the backhaul link quality remains constant.
Therefore, in the presence of a wired backhaul, the MUEs obtain an almost fixed rate for the fine message. 
However, the wireless backhaul offers a higher capacity for the SBSs close to the MBS compared to the SBSs, which are deployed far from the MBS. 
Thus, although the cell-edge MUEs leverage neighboring SBSs, the rate improvement is limited due to the low capacity backhaul link between SBS-MBS link. 
In consequence, cell-edge MUEs achieve a lower rate using RS-OTA as opposed to the wired case, i.e., using RS-WRD. 
In contrast, SBSs that are close to the MBS experience a better backhaul under a  wireless backhaul.
Therefore, compared to RS-WRD, the RS-OTA offers higher rates for MUEs close to the cell center.
The hybrid backhaul selects the best strategy over wired and wireless links.
Thus, as seen in Fig.~\ref{fig:coverage}, RS-HYB yields higher rates over the entire cell due to its ability to smartly exploit both the wired and the over-the-air backhauls.

\begin{figure}[!t]
\centering
\includegraphics[width=.5\textwidth]{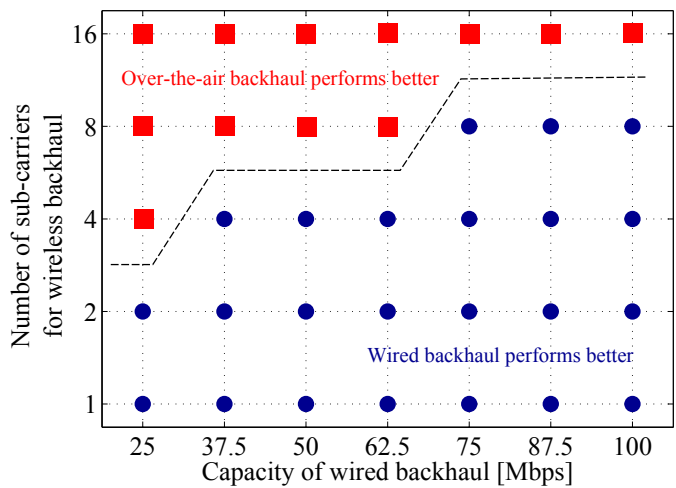}
\caption{Comparison between over-the-air backhaul with wired backhaul based on the average achievable throughputs. The upper section of the dotted line displays the over-the-air backhaul configurations, which exhibit better utilities compared to wired backhaul while the lower section illustrates vise versa.}
\label{fig:OTAvsWRD}
\end{figure}

In Fig.~\ref{fig:OTAvsWRD}, we show a comparison between over-the-air and wired backhauls depending on the bandwidth allocation for the wireless backhaul and the capacity allocation for the wired backhaul. 
For a fixed setup with M = 15 and S = 8, we compute the average MUE data rates by varying the capacity of the wired backhaul while also varying the number of sub-carriers allocated for over-the-air backhaul. 
In Fig.~\ref{fig:OTAvsWRD}, a square represents the case in which the over-the-air backhaul leads to higher MUE throughput or by a circle when the wired configuration provides highest MUE throughput.
Finally, in this figure, we observe that a wireless backhaul is preferred over the wired backhaul when $16$ or more sub-carriers are available. However, when the number of sub-carriers is below $16$, the wired backhaul yields a higher throughput.


\subsection{Convergence}

\begin{figure}[!t]
\centering
\includegraphics[width=.5\textwidth]{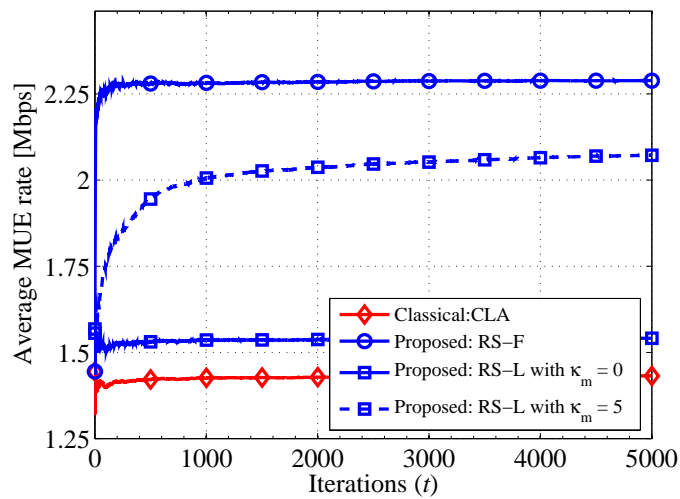}
\caption{Convergence of the proposed learning algorithm with the utility as MUE throughput ($M=10,S=8$).}
\label{fig:convg}
\end{figure}

\begin{figure}[!t]
\centering
\includegraphics[width=.5\textwidth]{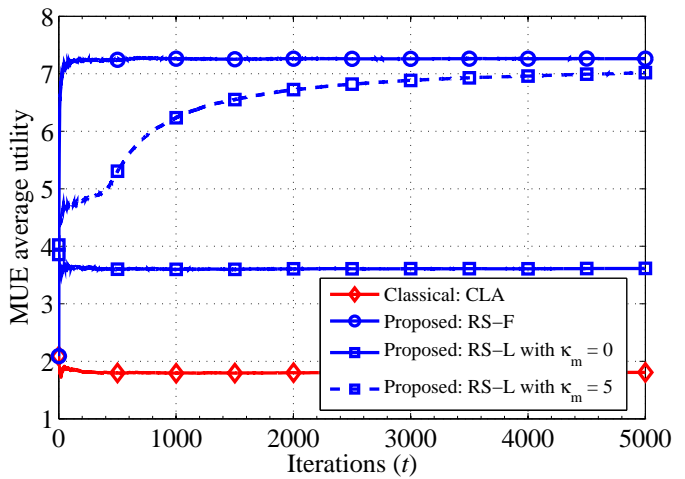}
\caption{Convergence of the proposed learning algorithm capturing the trade-off between rate and delay ($M=10,S=8$).}
\label{fig:convg_tradeoff}
\end{figure}

Next, we assess the convergence of the proposed learning scheme with incomplete information, i.e., RS-L.
Here we consider two scenarios; {\it (i)} each individual MUE attempts to maximize its throughput only and {\it (ii)} each individual MUE aims to improve its throughput while minimizing the transmission delay with the aid of the system power utility.
In Fig.~\ref{fig:convg}, we show the throughput achieved by the MUEs when $\alpha=0$ in (\ref{eqn:utility_tradeoff}) while in Fig.~\ref{fig:convg_tradeoff}, we plot the rate -- delay tradeoff  when $\alpha\in(0,1)$.
Here, $M=10$ MUEs and $S=8$ SBSs per sector and a wireless backhaul network are assumed.
From Fig.~\ref{fig:convg}, we see that the proposed RS-L algorithm converges in about $2500$ iterations.

Figures~\ref{fig:convg} and~\ref{fig:convg_tradeoff} clearly show the tradeoff between convergence time and available information.
For instance, we see that learning with incomplete information requires a higher convergence time when compared to the case with full information.
This is mainly due to: a) the noisy feedback, b) imperfect information, and c) the value of the temperature parameter $\kappa_m$ which governs the convergence speed.
A large $\kappa_m$ results in a uniform distribution for mixed strategies ($\boldsymbol{\pi}_m$) at the beginning and converges to the best mixed strategies with time.
This leads to  a smaller gap between the achievable utility between the case with full information and that with imperfect (partial) information.
Conversely, a larger $\kappa_m$ leads to MUEs exploiting their actions which are played at the beginning of the learning process, thus resulting in a faster convergence.
However, due to the lack of exploration, the resulting outcome becomes inefficient, i.e. yielding lower utilities.
In order to corroborate this,  we plot the learning curves in Fig.~\ref{fig:convg} and~\ref{fig:convg_tradeoff} with $\kappa_m=\lbrace 0,5\rbrace$. When $\kappa_m=0$, MUEs exploit their environment by playing their actions at the beginning (no exploration).
Therefore, myopic MUEs obtain quicker convergence, albeit a low utility, and for a larger $\kappa_m$ MUEs get more opportunities to explore their optimal actions at the cost of slower convergence.

\section{Conclusions}\label{sec:conclu}

In this paper, we have proposed a novel distributed reinforcement learning mechanism that allows macro-cellular users to optimize their performance by leveraging neighboring small cell base stations.
In this proposed scheme, the small cell base stations act as relays for the macro-cellular users and offload the traffic of the MUEs over a capacity limited and heterogeneous (wired, wireless) backhaul network. 
The proposed scheme allows the macro-cellular users to jointly optimize both access and backhaul links. 
We have formulated the problem as a noncooperative game and proposed a novel, fully distributed, learning algorithms that is shown to converge to a suitable equilibrium solution.
Simulation results show that our proposed algorithm allows the MUEs to improve their data rates as well as to significantly reduce their transmission delays compared to some existing benchmark approaches.
In our future work, we will study the backhaul aware downlink communication problem and the impact of carrier aggregation and multiple antennas on backhaul-aware resource management.

\appendices 
\section{Proof of Proposition \ref{thm:converge}} \label{sec:appndx}

Here, we provide the proof for $\alpha=0$ where the utility is equivalent to the MUE throughput.
Thus, we need to prove the functions $\tilde{u}_m(\cdot)$ and $\beta_{m,a_m}(\cdot)$ are Lipschitz.

For a given MUE $m\in\mathcal{M}$ at the time instance $t$, the utility $\tilde{u}_m(t)$ can be expressed in a simplified form as follows;
\begin{equation}
	\tilde{u}_m(t) = \log_2(1+x(t))
\end{equation}
where $x(t)$ is the signal-to-interference-plus-noise (SINR) of MUE $m$ at time $t$.
Without loss of generality, let us assume $x(t)>y(t)>0$ where $x(t)$ and $y(t)$ are any two possible SINR values at a given time instance.
Define $F_u = |\frac{\log_2(1+x(t))-\log_2(1+y(t))}{x(t)-y(t)}|$ which is reordered as follows; 
\begin{align}
	\nonumber F_u &= \biggl|\frac{\log_2(1+x(t))-\log_2(1+y(t))}{x(t)-y(t)}\biggr| \\
	&= \biggl|\frac{1}{(1+y(t))}\frac{\ln(1+z)}{z\ln(2)}\biggr| \leq \frac{\ln(1+z)}{z\ln(2)}
\end{align}
where $z=\frac{x(t)-y(t)}{1+y(t)}>0$.
Consider the first derivative of $\ln(1+z)/z$,
\begin{equation}
	\frac{d}{dz} \biggl(\frac{\ln(1+z)}{z}\biggr) = \frac{1}{z^2}\biggl(1-\frac{1}{1+z}-\ln(1+z)\biggr) < 0;~\forall z \geq 0,
\end{equation}
which yields the upper bound of $\frac{\ln(1+z)}{z}$ is $\lim_{z\rightarrow 0}\frac{\ln(1+z)}{z}=1$.
This concludes $F_u\leq\log_2e$ and $|\log_2(1+x(t))-\log_2(1+y(t))|\leq L|x(t)-y(t)|$ for a scaler $L$, i.e. $\tilde{u}_m(\cdot)$ is Lipschitz \cite{tech:heinonen05}.

Assume now $\boldsymbol{x}$ and $\boldsymbol{y}$ are two regret vectors of MUE $m$.
Since $\beta_{m,a_m}(\cdot)$ generates probability distributions for all regret vectors, it is obvious that following bound is true for any $\boldsymbol{x}$ and $\boldsymbol{y}$;
\begin{equation}
	0 < |\beta_{m,a_m}(\boldsymbol{x})-\beta_{m,a_m}(\boldsymbol{y})| < 1, 
\end{equation}
and it is possible to find a scaler $L$ such that $L|\boldsymbol{x}-\boldsymbol{y}|\geq 1$.
Therefore, for all $\boldsymbol{x}$, $\boldsymbol{y}$ the function $\beta_{m,a_m}(\cdot)$ satisfies $|\beta_{m,a_m}(\boldsymbol{x})-\beta_{m,a_m}(\boldsymbol{y})| \leq L|\boldsymbol{x}-\boldsymbol{y}|$ and henceforth, it is also a Lipschitz function \cite{tech:heinonen05}.

With the satisfactory of above conditions and using \cite[Equation (7) and Proposition 4.1]{notes:benaim99}, the convergence is achieved.

\vspace*{-0.2em}
\def\baselinestretch{1.2}
\bibliographystyle{IEEEtran}
\bibliography{IEEEabrv}


\vspace{-5 mm}

\begin{IEEEbiography}[{\includegraphics[width=1in,height=1.25in,clip,keepaspectratio]{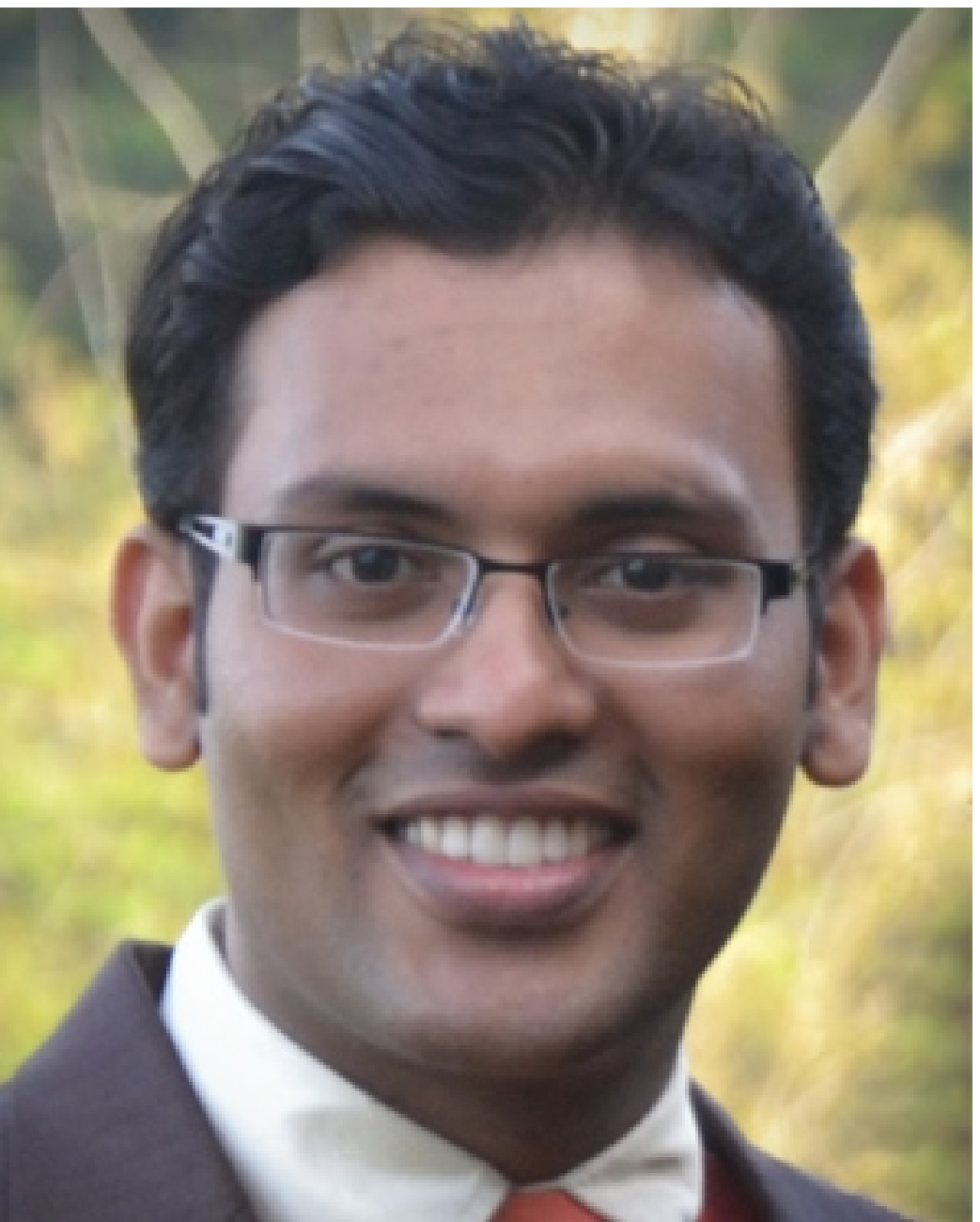}}]{Sumudu Samarakoon}
received his B. Sc. degree in Electronic and Telecommunication Engineering from the University of Moratuwa , Sri Lanka in 2009 and the M. Eng. degree from the Asian Institute of Technology, Thailand in 2011. 
He is currently working Dr. Tech (Hons.) degree in Communications Engineering in University of Oulu, Finland. 
Sumudu is also a member of the research staff of the Centre for Wireless Communications (CWC), Oulu, Finlad. 
His main research interests are in heterogeneous networks, radio resource management and game theory.
\end{IEEEbiography}

\begin{IEEEbiography}[{\includegraphics[width=1in,height=1.25in,clip,keepaspectratio]{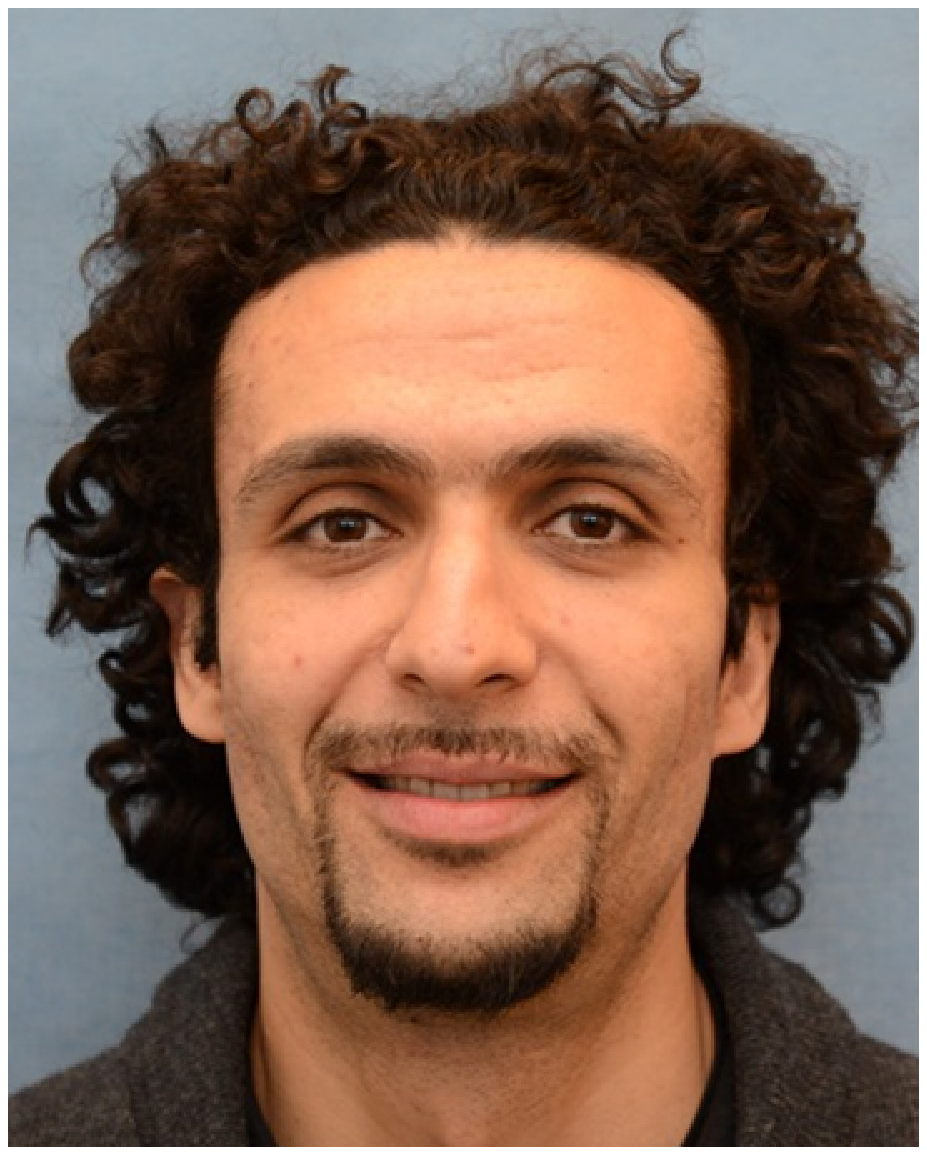}}]{Mehdi Bennis}
received his M.Sc. degree in Electrical Engineering jointly from the Ecole Polytechnique Federale de Lausanne (EPFL), Switzerland and the Eurecom Institute, France in 2002. From 2002 to 2004, he worked as a research engineer at IMRA-EUROPE investigating adaptive equalization algorithms for mobile digital TV. In 2004, he joined the Centre for Wireless Communications (CWC) at the University of Oulu, Finland as a research scientist. In 2008, he was a visiting researcher at the Alcatel-Lucent chair on flexible radio, SUPELEC. He obtained his Ph.D in December 2009 on spectrum sharing for future mobile cellular systems. He was the co-PI of the Broadband Evolved FEMTO (FP7-BeFEMTO) project, and currently the PI of the upcoming European project (CELTIC-SHARING). 

His main research interests are in radio resource management, heterogeneous networks, game theory and machine learning in the context of heterogeneous and small cell networks. Mehdi has published more than 50 research papers in international conferences, journals and book chapters. He was also a co-chair at the 1st international workshop on small cell wireless networks (SmallNets) in conjunction with ICC 2012, the 2nd Workshop on Cooperative Heterogeneous Networks (coHetNet) in conjunction with ICCCN 2012, and the upcoming 2nd international workshop on small cell wireless networks (SmallNets) in conjunction with ICC 2013 (Budapest, Hungary). Recently, he gave tutorial presentations at IEEE PIMRC 2012 (Sydney, Sep. Australia) and IEEE GLOBECOM 2012 (Annaheim, CA, Dec. 2012).
\end{IEEEbiography}
\pagebreak

\begin{IEEEbiography}[{\includegraphics[width=1in,height=1.25in,clip,keepaspectratio]{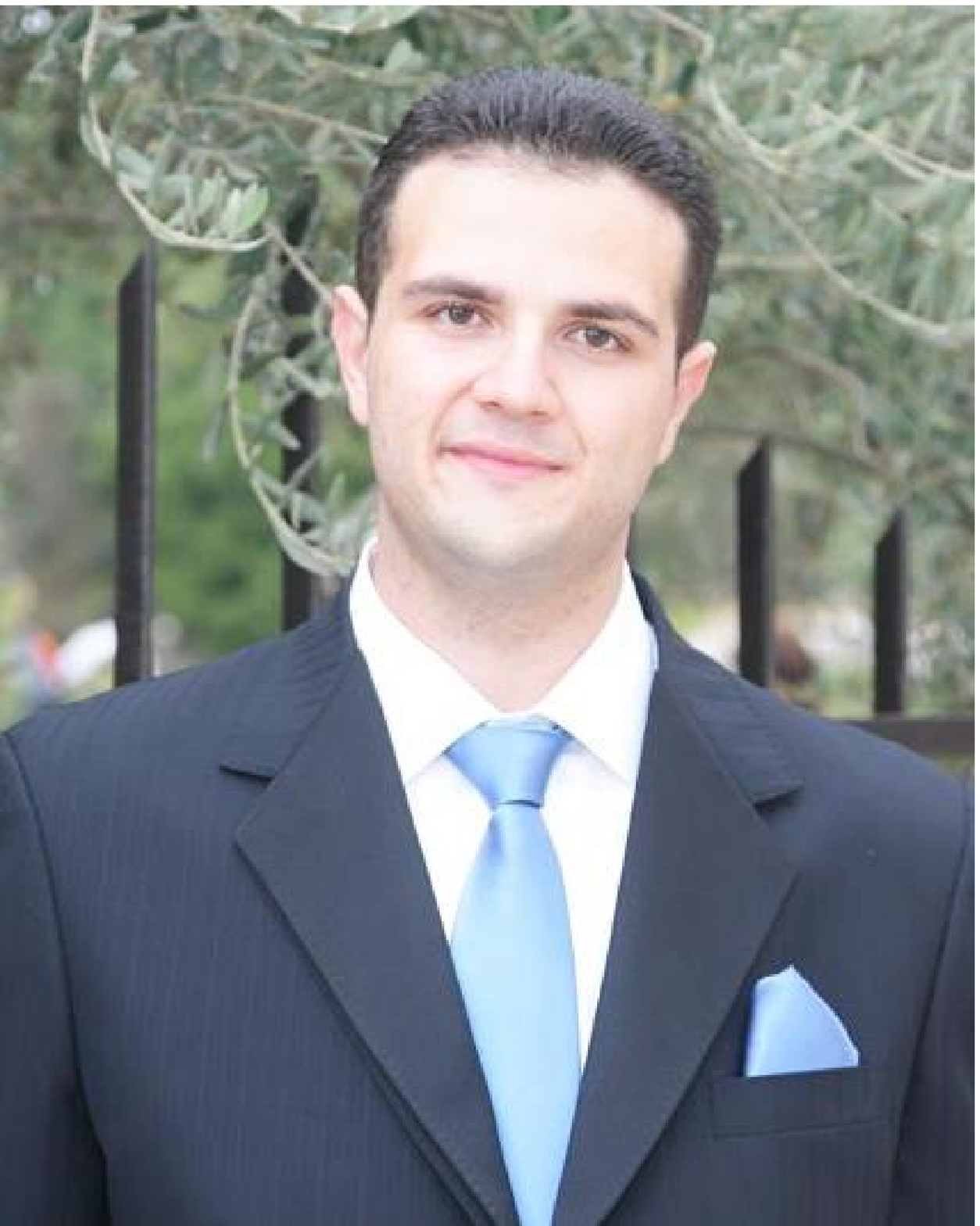}}]{Walid Saad} 
(S'07, M'10) received his B.E. degree in Computer and Communications Engineering from the Lebanese University in 2004, his M.E. in Computer and Communications Engineering from the American University of Beirut (AUB) in 2007, and his Ph.D degree from the University of Oslo in 2010. 
Currently, he is an Assistant Professor at the Electrical and Computer Engineering Department at the University of Miami. 
Prior to joining UM, he has held several research positions at institutions such as Princeton University and the University of Illinois at Urbana-Champaign. 

His research interests include wireless and small cell networks, game theory, network science, cognitive radio, wireless security, and smart grids. 
He has co-authored one book and over 75 international conference and journal publications in these areas. 
He was the author/co-author of the papers that received the Best Paper Award at the 7th International Symposium on Modeling and Optimization in Mobile, Ad Hoc and Wireless Networks (WiOpt), in June 2009, at the 5th International Conference on Internet Monitoring and Protection (ICIMP) in May 2010, and at IEEE WCNC in 2012. 
Dr. Saad is a recipient of the NSF CAREER Award in 2013.
\end{IEEEbiography}

\begin{IEEEbiography}[{\includegraphics[width=1in,height=1.25in,clip,keepaspectratio]{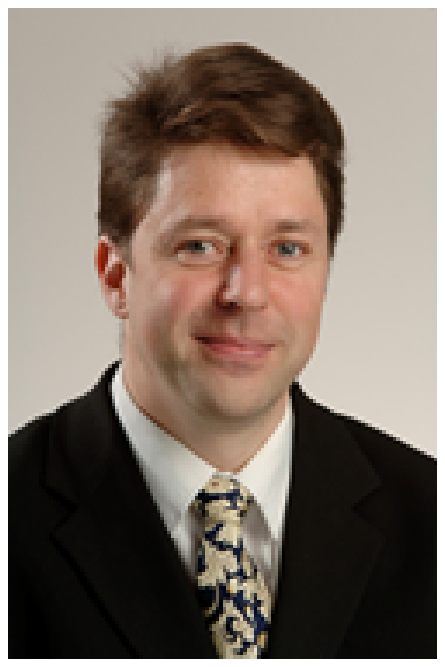}}]{Matti Latva-aho}
was born in Kuivaniemi, Finland in 1968.
He received the M.Sc., Lic.Tech. and Dr. Tech (Hons.) degrees in Electrical Engineering from the University of Oulu, Finland in 1992, 1996 and 1998, respectively.
From 1992 to 1993, he was a Research Engineer at Nokia Mobile Phones, Oulu, Finland.
During the years 1994 -- 1998 he was a Research Scientist at Telecommunication Laboratory and Centre for Wireless Communications at the University of Oulu.
Currently he is the Department Chair Professor of Digital Transmission Techniques and Head of Department at the University of Oulu, Department for Communications Engineering.
Prof. Latva-aho was Director of Centre for Wireless Communications at the University of Oulu during the years 1998-2006.
His research interests are related to mobile broadband wireless communication systems.
Prof. Latva-aho has published over 200 conference or journal papers in the field of wireless communications.
He has been TPC Chairman for PIMRC'06, TPC Co-Chairman for ChinaCom'07 and General Chairman for WPMC'08.
He acted as the Chairman and vice-chairman of IEEE Communications Finland Chapter in 2000 -- 2003. 
\end{IEEEbiography}
\vfill

\end{document}